\def\ps@headings{%
\def\@oddhead{\mbox{}\scriptsize\rightmark \hfil \thepage}%
\def\@evenhead{\scriptsize\thepage \hfil \leftmark\mbox{}}%
\def\@oddfoot{}%
\def\@evenfoot{}}
\newtheorem{lemma}{Lemma}
\newtheorem{proof}{Proof}
\newtheorem{definition}{Definition}
\newcommand{\DCost}{\text{\sc Cost}}
\newcommand{\dist}{\text{d}}
\newcommand{\cancel}[1]{}
\def\abs#1{\lvert #1 \rvert}
\def\BC#1#2{BCube(#1,#2)}
\newcommand{\BestNeighborI}{{\sc BestNeighbor-Indirect} }
\newcommand{\BestNeighborD}{{\sc BestNeighbor-Direct} }
\newcommand{\LCost}{\text{\sc Lcost}}
\begin{document}

\title{Simple Destination-Swap Strategies for\\Adaptive Intra- and Inter-Tenant VM Migration}


\author {Chen Avin$^1$, Omer Dunay$^1$, Stefan Schmid$^2$\\
{\small $^1$ Ben Gurion University, Beersheva, Israel;~~ $^2$ TU Berlin \& T-Labs, Berlin, Germany}\\
{\small \texttt{\{avin,dunay\}@cse.bgu.ac.il}; \texttt{stefan@net.t-labs.tu-berlin.de}}\\}

\date{}

\maketitle \thispagestyle{empty}
\sloppy

\begin{abstract}
%
This paper investigates the opportunities and limitations of
adaptive virtual machine (VM) migration to reduce communication costs in a virtualized
environment. We introduce a new formal model for the problem of online VM migration in two scenarios: (1) VMs can be migrated arbitrarily in the substrate network;
e.g., a private cloud provider may have an incentive to reduce the overall communication cost in the network.
(2) VMs can only be migrated within a given tenant; e.g., a user that was assigned a set of physical machines
may exchange the functionality of the VMs on these machines.

We propose a simple class of \emph{Destination-Swap} algorithms which are based on an aggressive collocation strategy
(inspired by splay datastructures) and which maintain a minimal and local amount of per-node (amortized cost) information
to decide where to migrate a VM and how; thus, the algorithms react quickly to changes in the load.
The algorithms
come in two main flavors, an indirect and distributed variant which keeps existing VM placements local, and a direct variant
 which keeps the number of affected VMs small.

We show that naturally, inter-tenant optimizations yield a larger
potential for optimization, but generally also a tenant itself can improve its embedding. Moreover, there exists an interesting
tradeoff between direct and indirect strategies: indirect variants are preferable under skewed and sparse communication patterns
due to their locality properties.
\end{abstract}

%
%

\renewcommand{\baselinestretch}{.91}

\section{Introduction}

Virtualization is perhaps the main innovation motor in today's networks. In particular, most datacenter resources have
become fully virtualized, and resource slices are carved out dynamically for the different applications.
The basic unit in such datacenters are \emph{virtual machines (VMs)}: Ideally, a VM appears to be a dedicated physical machine,
but in reality the VM may share the underlying physical machines with other VMs. Moreover, since the virtualization layer decouples
the virtual services and resources from the constraints of the underlying physical infrastructure, a VM can be \emph{migrated}
seamlessly to other physical machines.

Migration introduces new flexibilities on how to manage a given resource network. While the allocation (and isolation)
of CPU and memory resources is fairly well understood today, the \emph{access} to these resources
has often been treated as a second class citizen. However, it has recently been argued that connecting VMs with explicit bandwidth and hence communication guarantees,
 can reduce the variance and duration (price) of an execution.~\cite{mesos,talk-about,qclouds,amazon-per,EC2over}
Moving frequently communicating VMs closer together can save network bandwidth (or even energy~\cite{Shang2010Energy-aware}) and improve the predictability of the execution. Automatic collocation may also enhance the privacy of the communication~\cite{getoff}, especially in wide-area networks.

In this paper, we attend to a generic setting consisting of a physical network which is shared by multiple tenants.
The physical network connects different physical machines. It may represent a datacenter, but may also be
a wide-area network spanning globally distributed resources or ``micro-datacenters'' (e.g., the POPs or even street cabinets
of an Internet Service Provider, see also the trend towards network functions virtualization~\cite{nfv}). In the following, we will often refer to this network as the \emph{substrate} or \emph{host graph}.

Each tenant has access to a set of VMs. These VMs can be mapped arbitrarily on the host graph (maybe subject to some specification constraints
in case of a wide-area network). In order to complete their tasks, the VMs of a tenant need to communicate. Accordingly, in this paper,
we will think of the VMs and their interactions as a graph as well: each tenant describes a
(dynamic) \emph{guest graph} which is embedded on the host graph.

This paper studies migration algorithms which ensure that the frequently communicating VMs of a guest graph are migrated
together adaptively. We distinguish between two migration scenarios:
\begin{enumerate}
\item
 \emph{Inter-tenant VM migration:} In this scenario, VMs can be migrated arbitrarily on the host graph. For example,
a private datacenter provider may have an incentive to globally optimize the allocation of all VMs across all tenants, in order
to reduce network loads and improve performance. Or in the context of wide-area networks (WANs) and ISPs
supporting network functions virtualization, a provider may optimize the applications
of all its customers by migrating critical VMs geographically closer and hence reduce latencies.

\item \emph{Intra-tenant VM migration:} In this scenarios, the set of physical machines assigned to a tenant is fixed.
However, the tenant can re-assign the VMs (e.g.,~\cite{intra-tenant}) and functionality of its own application among these physical machines. For instance,
while a monopolistic public cloud provider may not have an incentive (or is not allowed) to optimize the VM mapping,
a user may improve its application performance by collocating frequently communicating machines.
\end{enumerate}

\textbf{Contribution.} This paper makes the following contributions.
\noindent (1) We initiate the study of adaptive inter- and intra-tenant migration strategies and present
a simple formal model which allows us to reason about and analytically evaluate different online algorithms.
In particular, we pursue a conservative and online approach and assume that we do not have any
a priori knowledge of the guest graph traffic matrices and their evolution over time. Thus, our approach also captures
settings where the tenant specifies rough communication requirements (such as \emph{Virtual Clusters} or \emph{Virtual-Oversubscribed-Clusters}~\cite{oktopus}).

\noindent (2) We introduce a simple class of \emph{Destination-Swap} migration algorithms which only require
a very small amount of per-node state information, namely a \emph{local amortized cost}. This simple approach allows us to focus on the main properties and
challenges of adaptive migration. The \emph{Destination-Swap} algorithms are based on an aggressive collocation
approach, i.e., communicating VMs are migrated together. This is in the spirit of classic self-adjusting
distributed datastructures such as splay trees. The algorithms come in two main flavors: An \emph{indirect} swapping
strategy where VMs are migrated iteratively to each other; this preserves (communication) locality and can be seen as a distributed
computing approach. A \emph{direct} swapping strategy where two VMs are collocated directly, i.e., without moving other VMs
along the path; although a single VM may be globally
displaced, the direct strategy has the advantage that only one external VM is affected (local impact).
Generally, while these algorithms can be used for any host graph topology,
we in this paper will mostly focus on the BCube datacenter topologies.

\noindent (3) We study different variants of \emph{Destination-Swap} migration algorithms under different communication patterns, and
show that they have some interesting properties. In particular, we find that
smart migration algorithms can indeed reduce the communication cost, especially in the inter-tenant
scenario; in the more constrained intra-tenant scenario, the amortized (communication) costs
can be lowered too, but only to a lesser extent. Moreover, we find that the algorithms adapt
relatively quickly to new communication patterns and that there are interesting
differences between direct and indirect swapping approaches. In particular, indirect variants are preferable
if communication patterns are sparse and guest graphs tree like, and if the communication
frequencies between VMs is subject to a higher variance; otherwise, direct variants perform better
due to the limited impact on other VM locations.

We understand our paper as a first step to shed light on the online VM migration problem,
and we kept our algorithms as simple as possible. In particular, we believe that our work
opens several interesting directions for future research, and more sophisticated approaches.


\section{Online VM Rearrangement}\label{sec:model}

We propose the following simple formal model to reason about dynamically reconfigurable virtual networks.
We consider a physical network $H=(M \cup S,L)$, the \emph{host graph} (e.g., a datacenter), where $M=\{m_1, m_2, \dots\}$ represents the physical machines (or simply \emph{hosts}),
$S= \{s_1, s_2, \dots\}$ represents the switches (or routers) connecting the hosts, and $L$ represents the physical links. A link (or edge) may connect a switch with a host and/or two hosts directly. We will assume that communication over a physical link comes at a certain cost (network load or latency), but we do not assume any strict link capacity constraints.

The physical network hosts a set of applications $\mathcal{A}=\{A_1,\ldots,A_k\}$, from $k$ different \emph{tenants}. (In the following, we will
simply assume that each application corresponds to one tenant, and will treat the terms as synonyms.) Each application $A_i$
consists of a set of virtual machines, i.e., $A_i=\{v^{(i1)},\ldots,v^{(ij)}, \dots \}$. We will refer to the cardinality of a set $X$ by $\abs{X}$.
For simplicity and ease of presentation,
in the following, we will focus on a scenario where each machine $m \in M$ can host exactly
one VM.

The \emph{arrangement function} $\lambda$ describes the mapping of the VMs to physical machines. That is, $\lambda(v)$ denotes the physical machine to which a given VM $v$ is mapped. Analogously, the VM hosted by a machine $m$ is denoted by $\lambda^{-1}(m)$; $\lambda^{-1}(m)=\bot$ means that $m$ does not host any VM.

The basic objective of the migration algorithms studied in this paper is to re-arrange VMs adaptively, in order to reduce the amount of communication. Since the communication pattern may change over time, mapping decisions may be reconsidered repeatedly. Therefore, we will use a time index $t$, and let $\lambda_{t}$ refer to the arrangement function at time $t$: $\lambda_t(v)$ is the mapping of $v$ at time $t$.

Let us use $\dist(u,v)$ to denote the distance between $\lambda(u)$ and $\lambda(v)$ in $H$: the number of physical links $\ell \in L$ needed to connect $\lambda(u)$ and $\lambda(v)$ along the \emph{shortest path} in the host graph. Since the arrangement function changes over time, we also define the temporal version $\dist_{t}(u,v)$: the shortest distance between $\lambda_{t}(u)$ and $\lambda_{t}(v)$.

We assume that the different VMs of tenant/application $A_{i}$ need to communicate in order to fulfill their task. Although our algorithms
are applicable more generally, in the following, in order to study the reactivity and convergence properties of the migrations,
we will often make the assumption that the interactions between the VMs follow a \emph{statistical distribution}:
this distribution is not known in advance, and interactions are sampled from this distribution independently and at random over time.
Accordingly, we can represent each application $A_{i}$ as a \emph{(tenant) guest graph} $G_{i}=(A_{i},E_{i},w_{i})$,
where $A_{i}$ is the set of virtual machines and
$E_i$ represents the interactions between the VMs. The weight $w(e)$ of an edge $e=(u,v) \in E_i$ describes the
frequency of the interactions between the two VMs $u,v \in A_{i}$ according to this distribution.
The total set of \emph{all} applications $\mathcal{A}$, the union of all tenant guest graphs, is represented as an \emph{(overall) guest graph} $G(V,E, w) = (G_{1} \cup G_{2} \cup \ldots \cup G_{k})$
where $V=\bigcup_i A_i$, $E=\bigcup_i E_i$ and $w=\bigcup_i w_i$.
In other words, the overall guest graph $G$ is a combination of \emph{independent} connected components.

For our evaluation, we will often assume that the guest graph $G$ is fixed. However, as we will see,
our migration algorithms have a low converge time and hence quickly adapt to a new structure of $G$.
We assume a conservative perspective and assume that the network controller (the orchestration manager) does not know anything
about the guest graph $G$ of the application in advance.
Rather, the communication pattern between VMs is revealed over time, and we are hence interested in \emph{online} migration algorithms $\textsc{Alg}$.

Generally, the input to the online algorithm $\textsc{Alg}$ is a sequence $\sigma=(\sigma_1, \sigma_2, \dots)$ of communication requests $\sigma_t = (u_t,v_t)$ (coming, e.g., from a fixed guest graph $G$ and $u_t,v_t \in E$, but it may also be arbitrary). After each such request, $\textsc{Alg}$ is allowed to migrate different VMs, i.e., redefine the mapping $\lambda_t$.
Of course, this request sequence approach is simplistic and just serves for the modeling: we do not imply that migration algorithms should happen on a per-request basis. Rather, in practice, a request may be defined by a certain communication volume (e.g., 5GB), possibly over a certain time interval.

Let the migration cost (i.e., the number of migrations) of $\textsc{Alg}$ for a request $\sigma_t$ at time $t$ be denoted as $\rho_{t}$. 
The main yardstick for our evaluation is the amortized cost of communication and VM migration for a given host graph, algorithm and request sequence:

\begin{definition}[Amortized (Communication) Cost]\label{def:AmortizeCost}
Given a host graph $H$, a migration algorithm $\textsc{Alg}$ and a sequence of communication requests $\sigma$ (e.g., from a guest graph $G=(A,E,w)$), the \emph{amortized cost} is defined as:
\begin{align}
\DCost (H, \textsc{Alg}, \sigma) = \frac{1}{|\sigma|} \sum\limits_{t=1, (u,v)\in \sigma_t}^{|\sigma|} (\dist_{t}(u,v) + \rho_{t})
\end{align}
\end{definition}
Our goal is to find an \textsc{Alg} that minimizes \DCost.

We will simplify the model by normalizing the migration cost to the cost of a one hop routing request, i.e., we consider the amortized cost defined as
$
\DCost (H, \textsc{Alg}, \sigma) $ $= 1/|\sigma| \sum_{t=1, (u,v)\in \sigma_t}^{|\sigma|}$ $\dist_{t}(u,v)
$
and we attend to the problem of finding an algorithm that minimizes the above amortized cost.

	\begin{figure}[h]
				\centering
				\includegraphics[width=.45\linewidth]{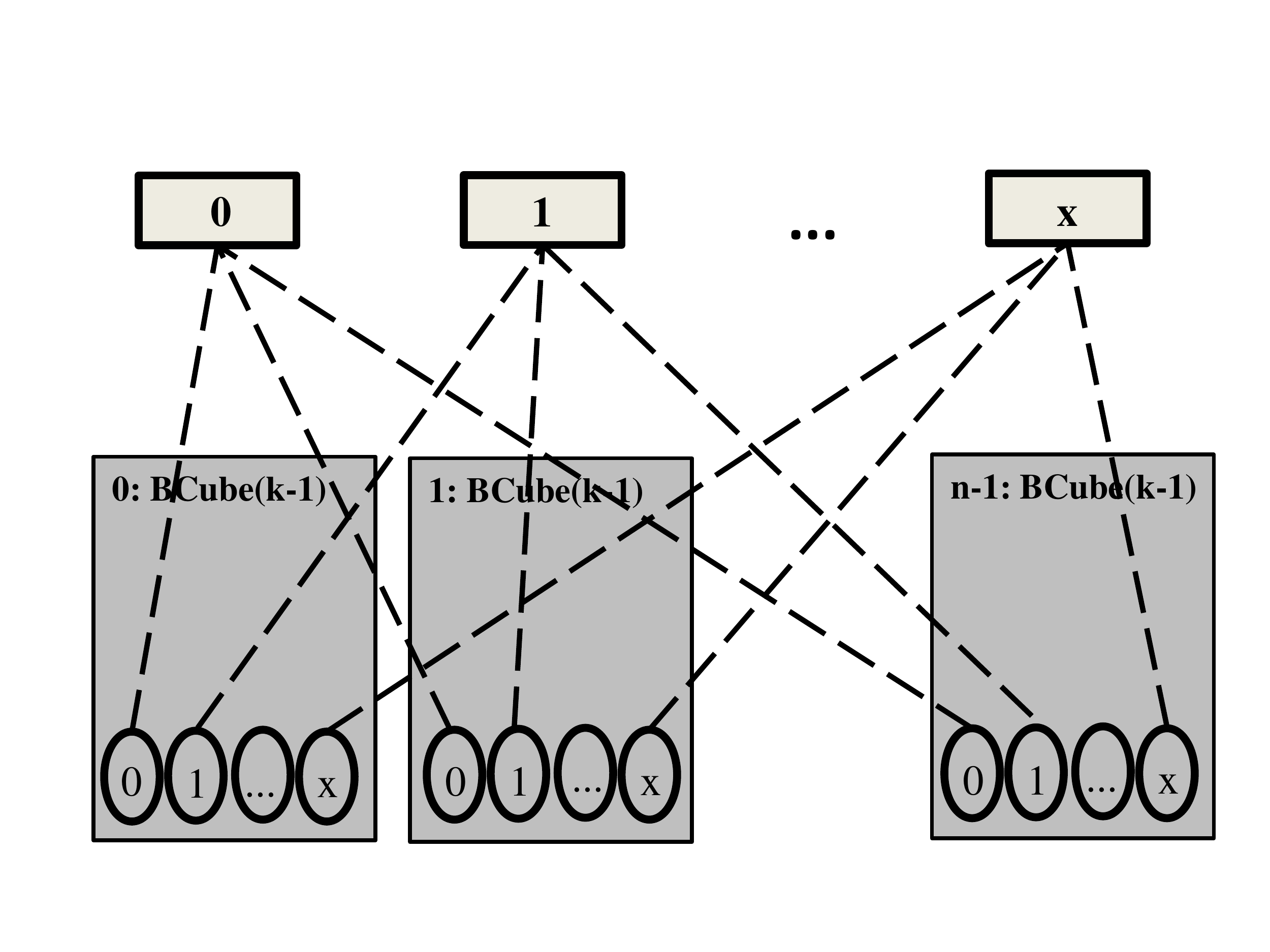}~~~~\includegraphics[width=.45\linewidth]{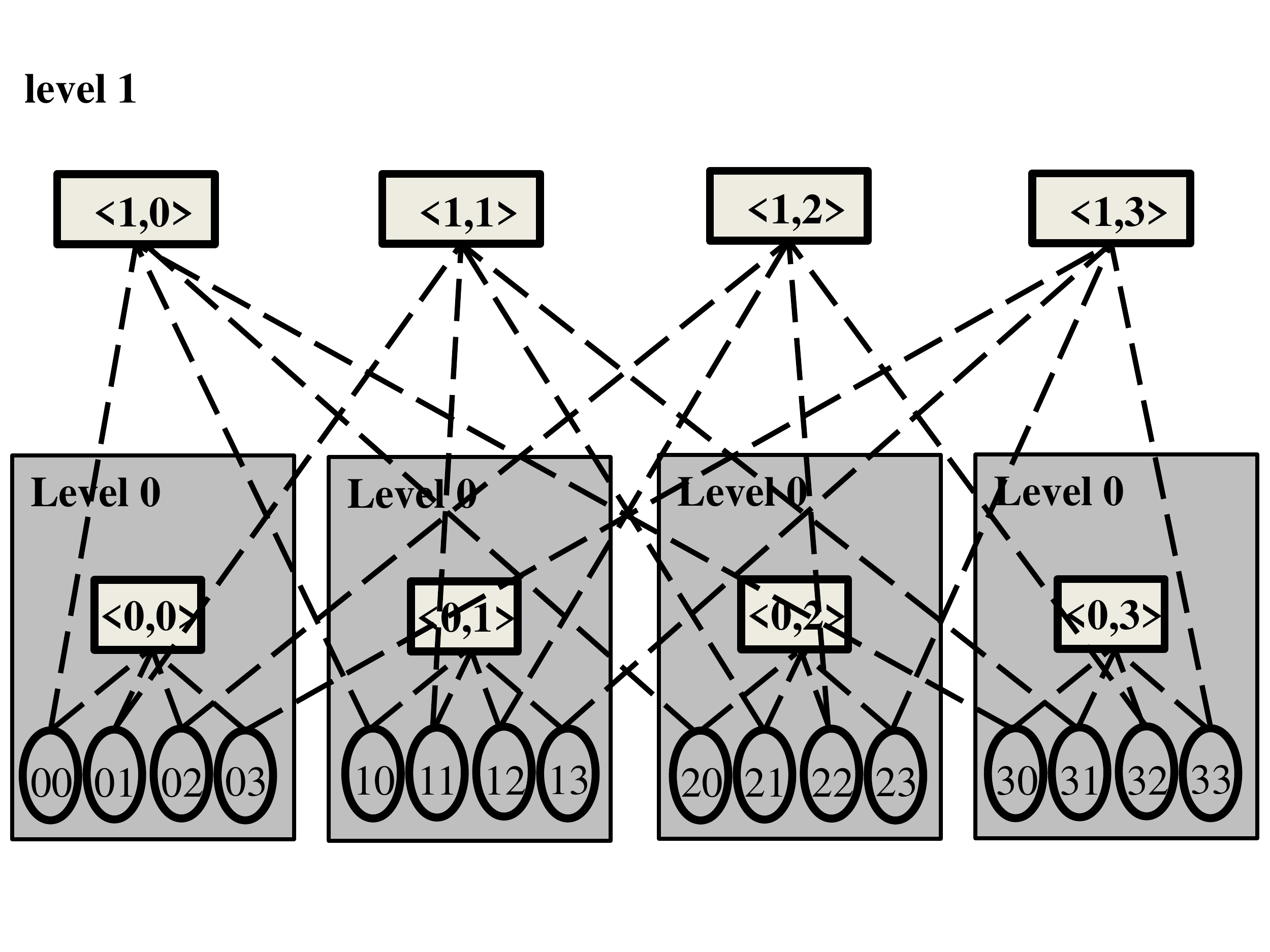}\\
				\caption{(a) BCube is defined recursively: A $\BC{n}{k}$ is constructed from $n$ $\BC{n}{k-1}$) and $n^k$ $n$-port switches. (b) The $\BC{4}{1}$.}
				\label{fig:bcube}
				\end{figure}

\subsection{The BCube Architecture}
\label{sec:Bcube}

Although our algorithms are applicable in general graphs, this paper focuses on the special family of \emph{BCube}~\cite{bcube2009} host graph topologies.
The \emph{BCube} is a new network architecture exhibiting a hypercubic topology. It is tailored for modular datacenters.

There are two types of devices in a \emph{BCube}: servers (hosts) with multiple ports, and switches that connect a constant number of servers.
The BCube topology $\BC{n}{k}$ has $k+1$ levels and uses $n$-port switches; hosts have $k+1$ ports.
The structure is defined recursively:
A  $\BC{n}{0}$ describes a ``star network'' where $n$ hosts connect to an $n$-port switch.
A $\BC{n}{1}$ is constructed from $n$ different $\BC{n}{0}$ connected by $n$ different $n$-port
switches. More generically, a $\BC{n}{k}$ ($k \ge1$) is constructed from $n$ $\BC{n}{k-1}$ and $n^k$ $n$-port switches.

Each server in a  $\BC{n}{k}$ has $k + 1$ ports, which are numbered from level-0 to level-$k$. It is easy to see that a BCube has
$N = n^{k+1}-1$ servers and $k+1$ levels of switches, with each level
having $n^k$ $n$-port switches. Figure~\ref{fig:bcube} presents an example of a BCube.


The BCube guarantees that switches only connect to servers and never directly connect to other switches.
As a direct consequence, for our purposes, we can treat the switches as dummy crossbars that connect several neighboring servers and let servers relay traffic for each other. In addition, the BCube has several nice properties that makes it attractive for use as a datacenter: First, its topology is very robust and has
$k+1$ disjoint paths between any two hosts. If we ignore the switches, the BCube topology is essentially a generalized
hypercube~\cite{generalizedHyper}. In order to address host and perform greedy routing, a  simple vector of size $k+1$ is used. The distance between any two hosts is
given by their \emph{Hamming distance}, and hence the network diameter is $k+1$ ~\cite{bcube2009}.

\section{The Destination-Swap Algorithms}\label{sec:algoanal}

This section introduces the class of \emph{Destination-Swap} migration algorithms that adaptively improve
the VM embedding over time. The appeal of these algorithms comes from their simplicity: (1) These algorithms are based on very little
state information and do not perform any long-term statistical analysis or a complicated pattern learning; this allows the algorithms to stay reactive and
adapt to new patterns quickly. (2) The simplicity of these algorithms facilitates the formal study of the benefits and inherent challenges of
VM migrations.

We will first present the general concepts of Destination-Swap and then consider some specific algorithms in more detail.

\cancel{
The first algorithm we denote as {\sc MeetMiddle} is very simple, Upon a communication request $(u,v)$ perform local swaps (migrations) between $u$ and one of it's neighbors on the shortest path to $v$ (select that neighbor randomly with uniform distribution). make the same operation for $v$. These swaps are performed until $u$ and $v$ are neighbors.}

\subsection{Algorithmic Framework}

As the name suggests, the \emph{Destination-Swap} migration algorithms
consist of two modules: one to select a ``destination'' where a given VM should be migrated to,
and one to decide on the migration or ``swapping'' strategy to reach the selected destination.

Concretely, upon a request $(u,v)$ at time $t$,  in order to reduce the communication cost of future requests,
the algorithm decides to migrate $u$ and $v$ closer to each other. To do so, the algorithm may for example select a server $\mu$ as a \emph{destination} host for $v$ and leave $u$ fixed (\emph{destination strategy}). In order to migrate $v$ to $\mu$, the algorithm may perform several rearrangements (\emph{swapping strategy}), involving also other VMs. Eventually, $\lambda(v) = \mu$ and $\dist_{t+1}(u,v) < \dist_t(u,v)$.

We will describe our algorithms from an inter-tenant perspective. However, by restricting migrations to the physical machines of the given tenant only (and otherwise not migrate at all), the algorithms can be adapted for the intra-tenant scenario. Moreover, although migration decisions can be performed globally, our algorithms
could also be seen from a distributed computing perspective: as we will see, our indirect swapping algorithm only involved local interactions, and our direct
swapping algorithm only involves three VMs.



\subsection{Destination Strategy}

In general, we seek to move $u$ and $v$ closer together upon each communication request
$(u,v)$, either by moving one of the two nodes or both.
Concretely, all the algorithms presented in this paper will make $u$ and $v$ immediate
neighbors after the request. The intuition for this rather aggressive
strategy comes from the related approaches used for dynamic splay trees~\cite{Sleator:1985:SBS:3828.3835} and their
distributed generalizations~\cite{ipdps13}.
\begin{wrapfigure}{l}{0.6\columnwidth}
	\centering
\centering
				\includegraphics[width=.55\columnwidth]{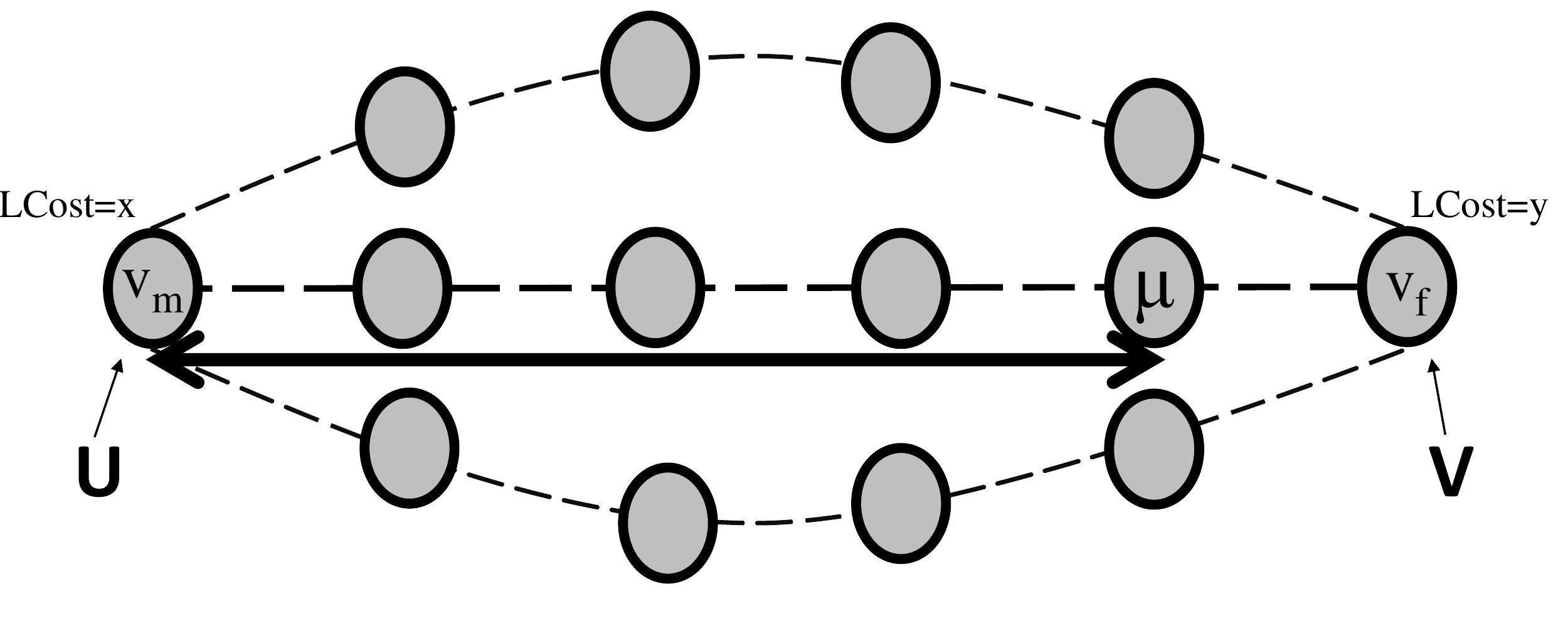}\\
				\caption{Principle of Destination-Swap Algorithms}
				\label{fig:dswap}	
\end{wrapfigure}
We distinguish between four basic methods: {\sc MeetMiddle}, {\sc Random}, {\sc BestSwitch} and {\sc BestNeighbor}.
While the first method does not distinguish between the two interacting VMs $u$ and $v$, the other three methods are based on the amortized cost states.
In the following, we elaborate more on these methods.

Methods {\sc Random}, {\sc BestSwitch} and {\sc BestNeighbor} only choose \emph{one} of the two VMs $u$ or $v$ for migration:
Let us denote the to be migrated VM by $v_m$  and the fix VM by $v_f$.  The migration destination, henceforth denoted by $\mu$, is
always selected to be a neighbor of $\lambda(v_f)$; accordingly, after the migration, $u$ and $v$ are direct neighbors in the host graph.

The decision of which of the two VMs $u$ and $v$ should be migrated, is based on a \emph{local amortized cost} criterion. The basic idea of this
criterion is that a VM that is already \emph{close} to all its communication partners, does not need to migrate. 
For such a VM the local amortized cost should be low. Contrarily, the VM which is located at a suboptimal position in the host graph and which is still far from their partners, the cost should be high and migrating it to a different position may be beneficial. Formally,
let $\sigma_{v}\subseteq \sigma$ denote the set of communication requests which include the VM $v$ (either as a communication source or destination).
\begin{definition}[Local Amortized Cost]\label{def:LocalAmortizeCost}
The \emph{local amortized cost} of a VM $v$ is defined as
\begin{align}
\LCost(v,\sigma) = \frac{1}{\abs{\sigma_{v}}} \frac{1}{\log (\abs{\sigma_{v}})} \sum\limits_{t\in \sigma_{v}} \dist_{t}(u,v)
\end{align}
\end{definition}

Note that in contrast to Definition~\ref{def:AmortizeCost} which defines the total amortized cost of the \emph{entire} network,
 Definition~\ref{def:LocalAmortizeCost} can be calculated locally by each VM and it defines a per-node cost criterion to decide which of the two communication end-points $u$ or $v$ is already located at a strategically better position and should hence not be migrated.
We use an additional logarithmic factor $1/\log (|\sigma_{v}|)$ to give more weight to a frequently communicating VM.


Algorithms {\sc Random}, {\sc BestSwitch} and {\sc BestNeighbor} migrate the VM with the higher local amortized cost while the one with lower cost stays at the same location.



Henceforth, for a communication request $(u,v)$, let the ``migrating node''  be $v_m = \arg\min_{u,v} \{ \LCost(u),\LCost(v)\}$;
analogously, let the ``fixed node'' be $v_f= \arg\min_{u,v} \{ \LCost(u),\LCost(v)\}$. If $\LCost(u)=\LCost(v)$, the tie can be broken arbitrarily.

After the choice of $v_f$ and $v_m$ is made, \emph{Destination-Swap} algorithms need to decide
on the destination: we select the the destination host $\mu$ as a neighbor of $\lambda(v_f)$. Figure~\ref{fig:dswap} illustrates the situation.

In order to describe the choice of $\mu$, we define another cost function called {\sc Sc}. It measures the amount of communication among a \emph{set} of hosts.
Informally, for a set of hosts $Q$, the score {\sc Sc}$(Q)$ counts all the communication requests between the VMs that are currently hosted by $Q$. Formally,
let $\sigma_{u,v}$ be the communication requests for which the communication partners VMs are $u$ and $v$.
\begin{definition}[{\sc Sc}]\label{def:scoreServers}
For a set of hosts $Q$, let {\sc Sc}$(Q)$
\begin {align}
\text{\sc Sc}(Q) =  \sum_{m, m' \in Q} \abs{\sigma_{(\lambda^{-1}(m), \lambda^{-1}(m'))}}
\end {align}
\end{definition}

For now let $m= \lambda^{-1}(v_f)$ be the host server of $v_f$.  The first three algorithm we consider are:
\begin{enumerate}
\item {\sc Random:} We select $\mu$ as a random neighbor of $m$.
\item {\sc BestSwitch:} We select $\mu$ from the ``best switch'' that $m$ is connected to according the following rule.
Recall that $m$ is connected to $k+1$
switches henceforth denoted by $\{T_0,\ldots,T_k\}$, each of which belongs to a different level of the BCube. For each switch $T_i$ we compute its
current {\sc Sc}$(Q_i)$, where $Q_i$ is the set of servers attached to switch $T_i$, and its score if $v_m$ would be connected to it (replacing the least communicating VM).
We select, as the best switch, the switch with the largest increase in score, and the corresponding machine $\mu$ of that switch.
Note that $\mu$  must be a neighbor, since the nodes belong to the same switch.
\item {\sc BestNeighbor:} Let $\mathcal{N}(m)$ denote the set of neighbors of $m$ in the host graph $H$. We select $\mu$ as the neighbor of $v_f$ for which
migrating $v_m$ to  $\mu$ increases {\sc Sc}$(\mathcal{N}(\mu))$ the most.
\end{enumerate}

The forth algorithm is inspired by similar strategies for splay tree networks (see, e.g.~\cite{ipdps13}) and serves as the
baseline performance.
\begin{enumerate}
\item[4)]{\sc MeetMiddle}
 We migrate both $u$ and $v$ such that they become neighbors \emph{in the middle} on an arbitrary shortest path between them.
 The two communicating VMs are treated as ``equal''.
\end{enumerate}

\subsection{Swapping Methods}

After having selected the destination node $\mu$ for $v_m$, we need to decide \emph{how to move} $v_m$ to $\mu$.
We distinguish between two main strategies.
\begin{enumerate}
	\item \emph{Direct}: Migrate $v$ to $\mu$ directly, i.e., exchange the locations of VMs $v$ and $\lambda^{-1}(\mu)$.

	\item \emph {Indirect}: Swap $v$ iteratively with neighbors along an arbitrary shortest path to $m$, until $\lambda(v)$ is at distance
	two from $m$. Then swap directly the VMs $v$ and $\lambda^{-1}(\mu)$.
\end{enumerate}

\begin{wrapfigure}{l}{0.5\columnwidth}
	\centering
\centering
				\centering
				\includegraphics[width=.48\columnwidth]{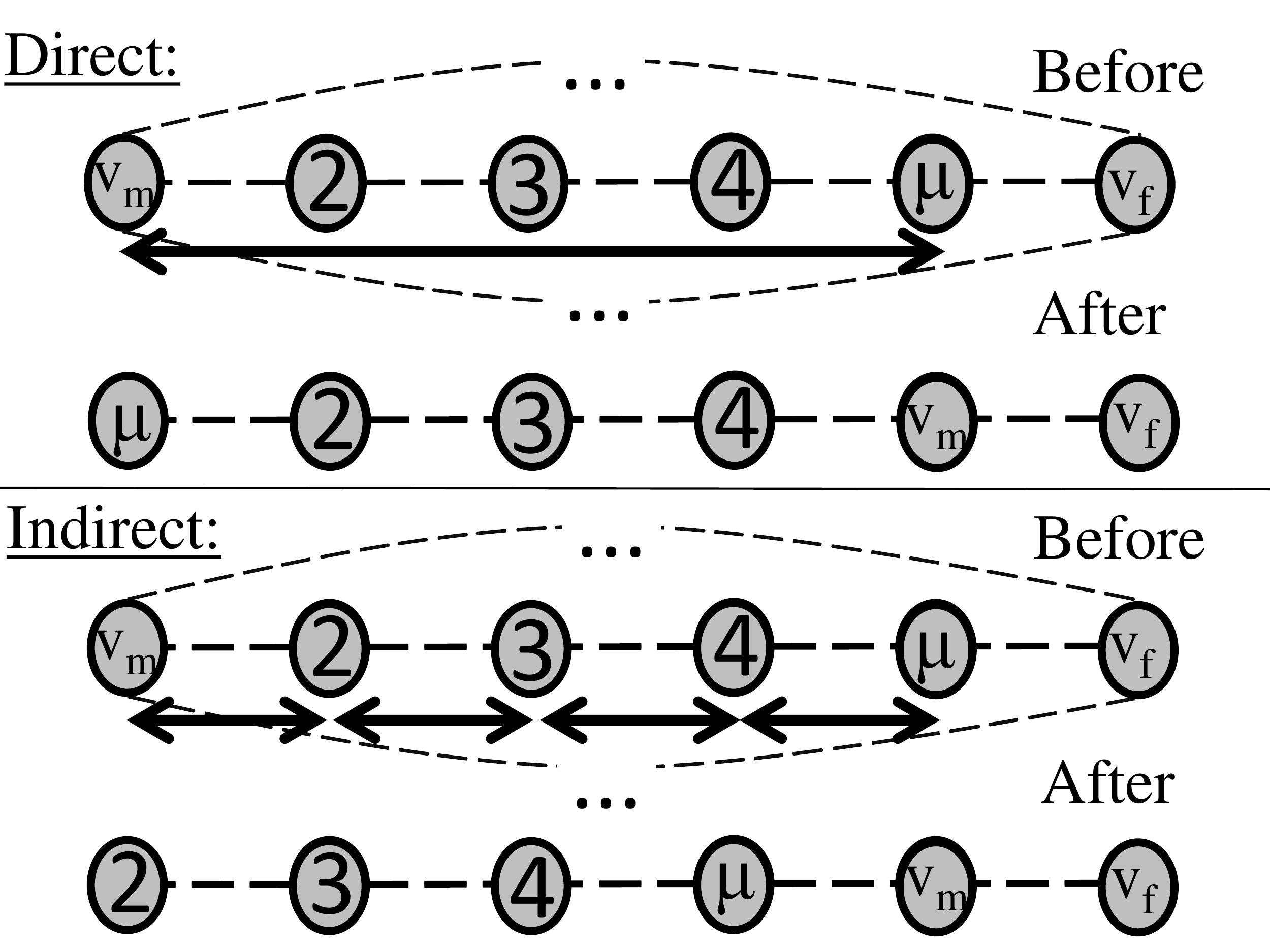}\\
				\caption{Direct vs Indirect Swapping}
				\label{fig:indirect}
\end{wrapfigure}
Intuitively, the indirect approach seeks to keep the embedding local, by not migrating $\lambda^{-1}(m)$ too far; in this light,
it can be seen as a distributed algorithm. The direct approach
on the other hand has the advantage that while $\lambda^{-1}(\mu)$ is globally displaced, less VMs are affected by the change:
while the indirect approach migrates \emph{all} VMs along the shortest path, the direct approach changes the locations of two VMs only.
Figure~\ref{fig:indirect} illustrates the situation.


\subsection{Summary}

A Destination-Swap algorithm is a combination of a destination and a swapping strategy. For example, \BestNeighborD is an algorithm that uses \emph{BestNeighbor} as its destination method and \emph{direct} as its swapping method.

\section{Evaluation}\label{sec:simulations}

We analyze the behavior and performance of the \emph{Destination-Swap} migration algorithms formally and
by simulation.

\subsection{Guest Graph Model}

In order to model the communication patterns, in our evaluation we focus on the following \emph{guest graphs}.
We assume that the overall communication pattern or \emph{(overall) guest graph} consists of multiple connected components, the guest graphs of the tenants
(the \emph{(tenant) guest graphs}); in the following, when it is clear from the context to which type of guest graph we refer, we will omit the \emph{overall} and \emph{tenant}
specifier.

If not stated differently, we will assume that the total number of VMs equals the number of physical servers, that is,
the sum of all connected components perfectly covers all servers. (Recall that we assume that a physical server has sufficient capacity to host exactly one VM, and that
we do not assume any strict capacity caps on the links.)

For simplicity, we will focus on scenarios where all connected components have the same topology.
For example, for a host graph $\BC{3}{7}$, we have $6,561$ servers; if guest graphs (the connected components)
are of size $729$ VMs, we will embed nine guest graphs of a given topology ($9\cdot 729 = 6,561$).

We concentrate on the following archetypical connected guest graph topologies: (1) complete graphs (a.k.a.~\emph{cliques}) describing
an \emph{all-to-all communication} pattern; (2)
star graphs describing a \emph{one-to-all communication} pattern; (3) for a host graph \BC{$n$}{$k$}
we consider smaller BCube guest graphs \BC{$n'$}{$k'$}:
we choose $k'$ for all possible values $\{0,\ldots,k-1\}$ and $n' = n$; (4) a guest graph consisting of a set of VM pairs connected by single edge.

The traffic matrices of (1) and (2) are standard and studied frequently~\cite{podc11}; the motivation for (3) is that it constitutes an interesting scenario
between the two extremes (1) and (2), and because, due to the recursive structure of the BCube, small BCubes (``sub-cubes'') can always be \emph{perfectly} embedded
in larger BCubes, i.e., there exists a mapping with amortized costs 1. The scenario can hence also be used as a simple baseline / cost lower bound to evaluate an embedding or migration algorithm. Finally, option (4) defines a set of VM pairs which need to be matched up, independently of other VMs. Since matching graphs can always be embedded perfectly on any host graph due the independence, it constitutes a natural test case for the migration algorithms.

In the following, let $G^{(K_{x})}$ denote a (overall) guest graphs in which all connected components are complete graphs $K_{x}$ consisting of $x$ nodes. Similarly, let $G^{(S_{x})}$ denote the guest graphs in which all connected components are star graphs $S_{x}$ with $x$ nodes, and let $G^{(\BC{n'}{k'})}$ be a guest graph in which all connected components are \BC{$n'$}{$k'$} graphs. Finally, let us refer to the matching graph by $G^{(M)}$.

A guest graph can come in two flavors: \emph{weighted} and \emph{unweighted}. In the \emph{unweighted} variant, all interactions have the same frequency.
For the weighted variant, we will consider randomized weight distributions, e.g., where the weight is chosen uniformly at random from the range $[1,\ldots,N]$, \emph{for each node} independently; the frequency of an edge is then simply the product of the frequencies of its incident VMs (the so-called \emph{product distribution}).

\subsection{Formal Analysis}

This section provides some analytical insights into what can and cannot be achieved by \emph{Destination-Swap} algorithms.
We first derive a formula for the embedding cost in an unoptimized setting, namely where VMs are mapped randomly to the
BCube servers. The setting serves as a simple and worst-case reference point, from which our migration algorithms
seek to improve the VM embeddings. Subsequently, we show that our algorithms have the desirable property, that
they can only improve the embedding, and never increase the amortized costs under a matching communication pattern. Finally, we derive a cost lower bound
on the optimal possible allocation by any migration algorithm for the corresponding guest graph.

\subsubsection{Baseline Performance}

As an initial placement and as a simple baseline for the migration algorithms, we
will sometimes consider a setting where VMs are mapped randomly to the physical machines.
In such a random initial setting, the expected communication cost of a given node pair (under any guest graph!) can
be computed as follows in a BCube.
\begin{lemma}\label{lemma:No-Algo}
Given a host graph $H=\BC{n}{k}$ and an arbitrary guest graph $G$, consider a mapping function $\lambda$ which assigns each VM to a machine of the BCube chosen uniformly at random.
Then, the expected communication cost for any VM pair \BC{$n$}{$k$} is given by
$(k+1)(n-1)/n$.
\end{lemma}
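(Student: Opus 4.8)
The plan is to exploit the coordinate (address-vector) structure of the BCube together with the fact, recalled earlier in the excerpt, that the distance between two servers equals the \emph{Hamming distance} of their address vectors. Since every server of $\BC{n}{k}$ is labelled by a vector of length $k+1$ over the alphabet $\{0,1,\dots,n-1\}$, a uniformly random placement $\lambda$ makes the address $\lambda(v)$ of each VM a uniformly random such vector. Fixing an arbitrary VM pair $(u,v)$, I would therefore reduce the claim to computing the \emph{expected Hamming distance} between two random addresses, and crucially this quantity does not depend on the guest graph $G$ at all, since it uses only the marginal distribution of where a single pair lands.

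First I would write the Hamming distance as a sum of indicators. For $i \in \{0,\dots,k\}$, let $X_i$ be the indicator of the event that the $i$-th coordinates of $\lambda(u)$ and $\lambda(v)$ disagree; then $\dist(u,v) = \sum_{i=0}^{k} X_i$. By symmetry I may fix the address of $u$ and treat $\lambda(v)$ as uniform over all $n^{k+1}$ address vectors, so that the $k+1$ coordinates of $\lambda(v)$ are independent and uniform on $\{0,\dots,n-1\}$. For each coordinate, the probability of disagreeing with the (fixed) corresponding coordinate of $u$ is exactly $(n-1)/n$, since there are $n$ equally likely values of which $n-1$ differ; hence $E[X_i] = (n-1)/n$. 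Applying linearity of expectation across the $k+1$ coordinates then yields
\[
E[\dist(u,v)] = \sum_{i=0}^{k} E[X_i] = (k+1)\frac{n-1}{n},
\]
which is precisely the claimed value.

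I expect the only delicate point to be the modelling step that lets me treat $\lambda(u)$ and $\lambda(v)$ as independent uniform addresses (equivalently, fix $u$ and draw $v$ uniformly over \emph{all} servers). Strictly, a placement maps the two VMs to \emph{distinct} servers, so one could instead condition on $\lambda(u)\neq\lambda(v)$; this only removes the single distance-$0$ coincidence event, of probability $n^{-(k+1)}$, and multiplies the answer by the vanishing factor $1/(1-n^{-(k+1)})$. For the clean baseline figure $(k+1)(n-1)/n$ the per-coordinate independence view is the correct one, and I would state explicitly that server collisions are ignored so that the $k+1$ coordinate indicators remain independent and the computation above goes through verbatim for \emph{any} guest graph.
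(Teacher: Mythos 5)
Your proof is correct, but it takes a genuinely different route from the paper's. The paper argues via the full distance distribution: it counts the number of servers at Hamming distance $i$ from a fixed server, namely $(n-1)^i\binom{k+1}{i}$, and then evaluates the expectation $\sum_{i=1}^{k+1} i\binom{k+1}{i}(n-1)^i / n^{k+1}$, which requires the binomial identity $\sum_i i\binom{k+1}{i}(n-1)^i=(k+1)(n-1)n^k$ to collapse to $(k+1)(n-1)/n$. You instead write the Hamming distance as a sum of $k+1$ per-coordinate disagreement indicators, each with expectation $(n-1)/n$, and invoke linearity of expectation; this avoids the counting and the summation identity entirely and is the more elementary and more transparent argument. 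Your closing remark about collisions is also well taken: the paper's own computation divides by $n^{k+1}$ (i.e., it includes the distance-$0$ event where both VMs land on the same server), so it implicitly adopts exactly the same modelling convention you make explicit, and the two proofs establish the identical statement under the identical assumption. The one thing the paper's approach buys that yours does not is the explicit count of servers at each distance $i$, which the authors reuse later (e.g., the ``$14$ neighbors at distance one'' remark in the discussion of Figure~\ref{fig:Second-3-8-Complete-WithOutWeights}); your indicator argument yields only the expectation.
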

\begin{proof}
Let $X(u,v)$ be a random variable representing the distance of the shortest path between $u$ and $v$.
Recall that in a BCube, the distance between two nodes is given by their Hamming distance.
Fix a specific server $m$ in \BC{$n$}{$k$}. Since the length of the identifiers of the \BC{$n$}{$k$} nodes is $k+1$,
there are $k+1\choose i$
addresses with a Hamming distance of $i$ from $m$.
For every index, $(n-1)$ values can be chosen, and hence the number of servers at distance $i$ from node $m$ is
$(n-1)^i {k+1 \choose i}$
Due to the symmetries of \BC{$n$}{$k$}, this is also the formula for the number of servers at distance $i$ from any given server.
Let us refer to the diameter of the graph by $D$ and recall that
 the diameter of \BC{$n$}{$k$} is $k+1$. Then the expected shortest path from any given $v$ is:

 \begin{footnotesize}
\begin {align*}
 E[X(u,v)] &= \sum\limits_{i=1}^{D} i\cdot P[\dist(u,v) = 1]
= \frac{\sum\limits_{i=1}^{k+1} (i\cdot\dist_i(u,v))}{n^{k+1}} \\
&= \frac{\sum_{i=1}^{k+1} i{k+1\choose i}(n-1)^i}{n^{k+1}}
= \frac{(k+1)(n-1)}{n}
\end {align*}
 \end{footnotesize}
\end{proof}

\subsubsection{Monotonic Improvement Property}

Despite their simplicity, our algorithms feature some basic guarantees.
For example, our algorithms pass the \emph{monotonic improvement test} under
matching communication patterns: if VMs communicate in a pair-wise fashion,
i.e., subject to $G^{(M)}$, the amortized communication cost can only become lower over time.
The proof is simple and omitted due to space constraints.
\begin{lemma}\label{lemma:Matching}
Given a host graph $H = \BC{n}{k}$ and a guest graph $G = G^{(M)}$,
\emph{Destination-Swap}
algorithms can only reduce the amortized communication cost over time.
\end{lemma}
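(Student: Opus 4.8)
The plan is to introduce a potential function measuring the total embedding quality and to show that no migration can increase it; since under the matching distribution the expected cost of a request is proportional to this potential, monotonicity of the potential yields the claim. Concretely, let $E$ be the set of matching edges, so every VM lies in exactly one pair, and define $\Phi_t = \sum_{(u,v)\in E} \dist_{t}(u,v)$. Because requests are drawn uniformly from $E$, the expected cost of the request at time $t$, conditioned on the arrangement $\lambda_t$, is exactly $\Phi_t/\abs{E}$. Hence it suffices to prove that every migration performed by a \emph{Destination-Swap} algorithm satisfies $\Phi_{t+1}\le \Phi_t$.

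First I would dispatch the indirect and \MeetMiddle strategies, which I expect to be the easy case. Both realize a migration as a sequence of elementary swaps of two VMs sitting on \emph{adjacent} servers, in which the migrating VM $v_m$ advances one hop along a shortest path toward its partner $v_f$. I would analyze a single such swap: the pair $(v_m,v_f)$ has its distance reduced by exactly one, while the only other affected pair is the one containing the displaced neighbor, whose endpoint moves a single hop and whose pair-distance therefore changes by at most one by the triangle inequality (note the displaced VM's partner is neither $v_m$ nor $v_f$, since those are matched to each other). The net change per elementary swap is thus at most $(-1)+(+1)=0$, and summing over the path gives $\Phi_{t+1}\le\Phi_t$; \MeetMiddle is the same argument run from both endpoints simultaneously.

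The main obstacle is the direct swap, where $v_m$ and $w=\lambda^{-1}(\mu)$ exchange locations in a single step although their servers may be far apart, so the displaced VM $w$ can be relocated by more than one hop and its pair-distance may grow. Here I would exploit that the destination $\mu$ is a \emph{neighbor} of the fixed server $m=\lambda(v_f)$. Writing $a=\lambda_t(v_m)$ and letting $x$ be the location of $w$'s partner, the only two pairs that change give
\begin{align}
\Phi_{t+1}-\Phi_t = \bigl(1-\dist_{t}(v_m,v_f)\bigr) + \bigl(\dist(a,x)-\dist(\mu,x)\bigr),
\end{align}
and by the triangle inequality the second bracket is at most $\dist(a,\mu)$. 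The statement therefore follows whenever $\dist(a,\mu)\le \dist_{t}(v_m,v_f)-1$, i.e.\ whenever $\mu$ lies on a shortest path from $a$ to $m$, so that the relocation consumes the same hop that an indirect swap would have consumed. Establishing this geodesic property of the chosen destination — rather than the potential bookkeeping — is the delicate part, since a carelessly chosen ``far'' neighbor of $m$ could in principle push $w$ away from its partner and raise $\Phi$ by as much as two; I would close the argument by showing that the destination rules can always be realized by, or reduced to, a shortest-path neighbor, which reunites the direct case with the telescoping bound already obtained for the indirect variant.
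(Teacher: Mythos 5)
The paper itself omits the proof of this lemma (``simple and omitted due to space constraints''), so there is nothing to compare against line by line; your potential-function framing $\Phi_t=\sum_{(u,v)\in E}\dist_t(u,v)$ is a natural way to formalize the claim, and your accounting for an elementary adjacent swap ($-1$ for the served pair, at most $+1$ for the displaced pair) is correct. However, your argument has a genuine gap exactly where you flag ``the delicate part,'' and that gap is not closable in the way you propose. The destination rules do \emph{not} guarantee that $\mu$ lies on a shortest path from $a=\lambda_t(v_m)$ to $m=\lambda_t(v_f)$: {\sc Random} picks a \emph{uniformly random} neighbor of $m$, and in $\BC{n}{k}$ only $d=\dist(a,m)$ of the $(n-1)(k+1)$ neighbors of $m$ are geodesic (those flipping a coordinate where $a$ and $m$ disagree to $a$'s value); the remaining ones satisfy $\dist(a,\mu)\in\{d,d+1\}$. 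Concretely, in $\BC{3}{2}$ take $a=(0,0,0)$, $m=(1,1,0)$, $\mu=(1,1,1)$, and let $w=\lambda^{-1}(\mu)$ have its partner at $(1,1,2)$: the served pair improves by $1$ while $w$'s pair degrades from $1$ to $3$, so $\Phi$ strictly increases. Nothing in the definitions of {\sc BestSwitch} or {\sc BestNeighbor} rules this out either, since their score {\sc Sc} is computed over current neighborhoods and does not penalize where the evicted VM lands. So the reduction of the direct case to the telescoping bound is asserted, not proved, and as stated it is false for at least one algorithm in the class.

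Two further points. First, the same defect infects your ``easy'' case: the paper's indirect strategy walks $v_m$ to distance two from $m$ and then performs a \emph{direct} swap with $\lambda^{-1}(\mu)$, and if $\mu$ is not adjacent to $v_m$'s current host this last hop is again a non-geodesic direct swap, so you cannot treat the indirect variant as a pure sequence of adjacent swaps. Second, your identity ``expected request cost $=\Phi_t/\abs{E}$'' only holds for the unweighted matching; for the weighted variant the relevant potential is $\sum_e w(e)\dist_t(e)$, and there even an adjacent swap can increase the potential when the displaced pair has larger weight than the served pair. To salvage the argument you would need either to restrict the lemma to geodesic destination choices (or redefine $\mu$ to be a geodesic neighbor), or to prove the weaker statement in expectation over the request distribution rather than per step; as written, the proposal does not establish the lemma.
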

%
%

\subsubsection{Lower Bound}

Of course, there are inherent limitations on what can be achieved by embedding optimizing algorithms.
For clique and star like guest graphs, bounds can be computed from cuts and Huffman coding arguments, see e.g.~\cite{ipdps13}.
However, note that optimal embeddings are possible in our sub-cube guest graphs $G^{B{n'}{k'}}$. We will indicate
this lower bound in the figures of our simulation.

\begin{lemma}
 The guest graph $G=G^{(\BC{n'}{k'})}$ can be perfectly embedded in the host graph $H=\BC{n}{k}$: the amortized cost (Definition~\ref{def:AmortizeCost}) is one.
\end{lemma}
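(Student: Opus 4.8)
The plan is to exploit the recursive, self-similar address structure of the BCube directly and to exhibit an explicit static arrangement $\lambda$ under which every guest edge is realized by a host edge of length one. Recall from Section~\ref{sec:Bcube} that each server of $\BC{n}{k}$ is identified by a vector of length $k+1$ over the alphabet $\{0,\ldots,n-1\}$, that the distance between two servers equals the Hamming distance of their identifiers, and that two servers are neighbors (distance one) exactly when their identifiers differ in a single coordinate, in which case they are attached to a common switch at the level of the differing coordinate. I would first restate these facts, since the whole argument is a counting and addressing argument layered on top of them.

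First I would establish the tiling. The host $\BC{n}{k}$ has $n^{k+1}$ servers and each guest component $\BC{n}{k'}$ has $n^{k'+1}$ VMs, so there are exactly $n^{k-k'}$ components, which matches the number of distinct assignments to the last $k-k'$ coordinates. I would therefore partition the host servers into $n^{k-k'}$ classes, one per label $c\in\{0,\ldots,n-1\}^{k-k'}$ assigned to coordinates $k'+1,\ldots,k$; inside a class the first $k'+1$ coordinates range freely, so the class is in natural bijection with the identifier set of $\BC{n}{k'}$.

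Next I would define the embedding: assign each guest component to a distinct class, and map the VM with $\BC{n}{k'}$-identifier $a$ to the host server whose first $k'+1$ coordinates equal $a$ and whose last $k-k'$ coordinates equal the class label $c$. The key verification is that adjacency is preserved: an edge of $\BC{n}{k'}$ joins two VMs whose identifiers $a,a'$ differ in exactly one coordinate $i\le k'$; under the map the two images agree on the suffix $c$ and differ only in coordinate $i$, hence have Hamming distance one and share a level-$i$ switch in $\BC{n}{k}$, i.e.\ $\dist(u,v)=1$. Since distinct servers are always at distance at least one, this arrangement is optimal and no migration could improve it.

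Finally I would conclude the cost statement. Fixing this static arrangement, every request $(u,v)\in E$ satisfies $\dist_t(u,v)=1$ and the algorithm never needs to migrate ($\rho_t=0$), so by Definition~\ref{def:AmortizeCost} the amortized cost equals $\frac{1}{|\sigma|}\sum_t 1 = 1$. I do not expect a genuine obstacle here; the only point requiring care is the bookkeeping that the $n^{k-k'}$ classes exactly tile the host and that coordinate indices are handled consistently, so that a single-coordinate difference in the guest truly corresponds to a single-coordinate (hence switch-connected) difference in the host.
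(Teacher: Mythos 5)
Your proposal is correct, and it is exactly the argument the paper has in mind: the paper states this lemma without proof, merely appealing to the recursive structure of the BCube, and your coordinate-wise tiling (fixing the last $k-k'$ identifier coordinates per component and checking that a single-coordinate difference in the guest maps to a single-coordinate, hence switch-connected, difference in the host) is the standard way to make that remark precise. The only bookkeeping point worth noting is that the server count of $\BC{n}{k}$ is $n^{k+1}$ (the paper's ``$n^{k+1}-1$'' is a typo), which is what your count of $n^{k-k'}$ classes correctly uses.
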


For $K_{x}$, we will use the following approximate (locally optimal) lower bound:
Given $H=\BC{n}{k}$, $G=G^{(K_{x})}$ and integer $\log_{n}{x}$ then a situation where all VMs in $K_{x}$ are arranged in the sub-cube $\BC{n}{\log_{n}{x}-1}$
constitutes a local minimum with cost is $(x+1)(n-1)/n$. The proof is by induction.


\subsection{Simulation Study}

To evaluate our \emph{Destination-Swap} algorithms in more detail and to compare their behavior in different settings and under
different communication patterns $\sigma$, we developed a simulation framework for the \BC{$n$}{$k$} datacenter topology.

If not stated otherwise, for our simulations, we will
consider a $\BC{3}{7}$ which consists of 6,561 nodes.
For each experiment, we simulate $|\sigma|=3$m requests.
We will repeat each experiment ten times and plot the average
values. (The variance of our experiments is typically very low.)

	\begin{figure}[h]
				\centering
				\includegraphics[width=.45\columnwidth]{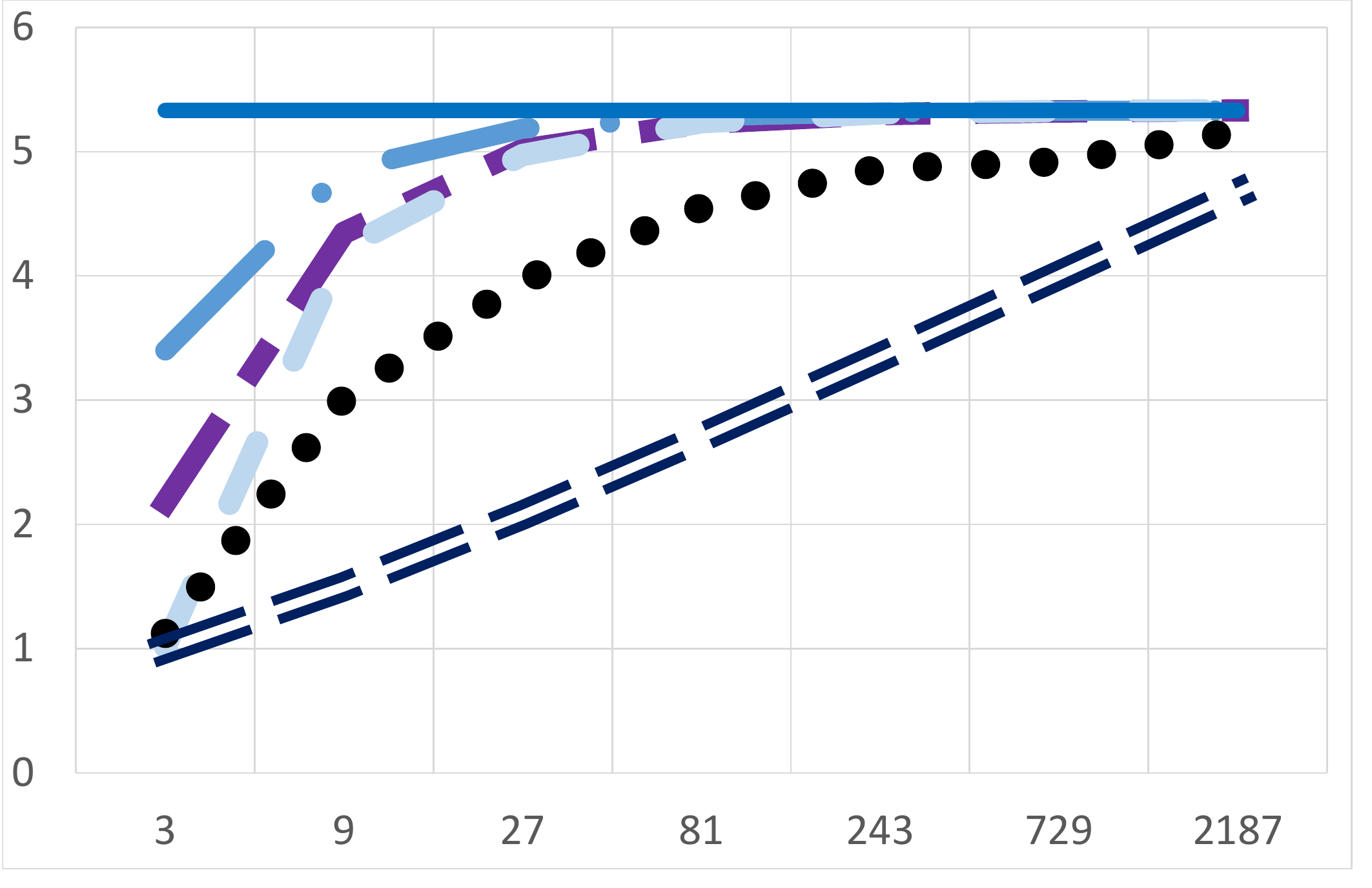}~				 \includegraphics[width=.45\columnwidth]{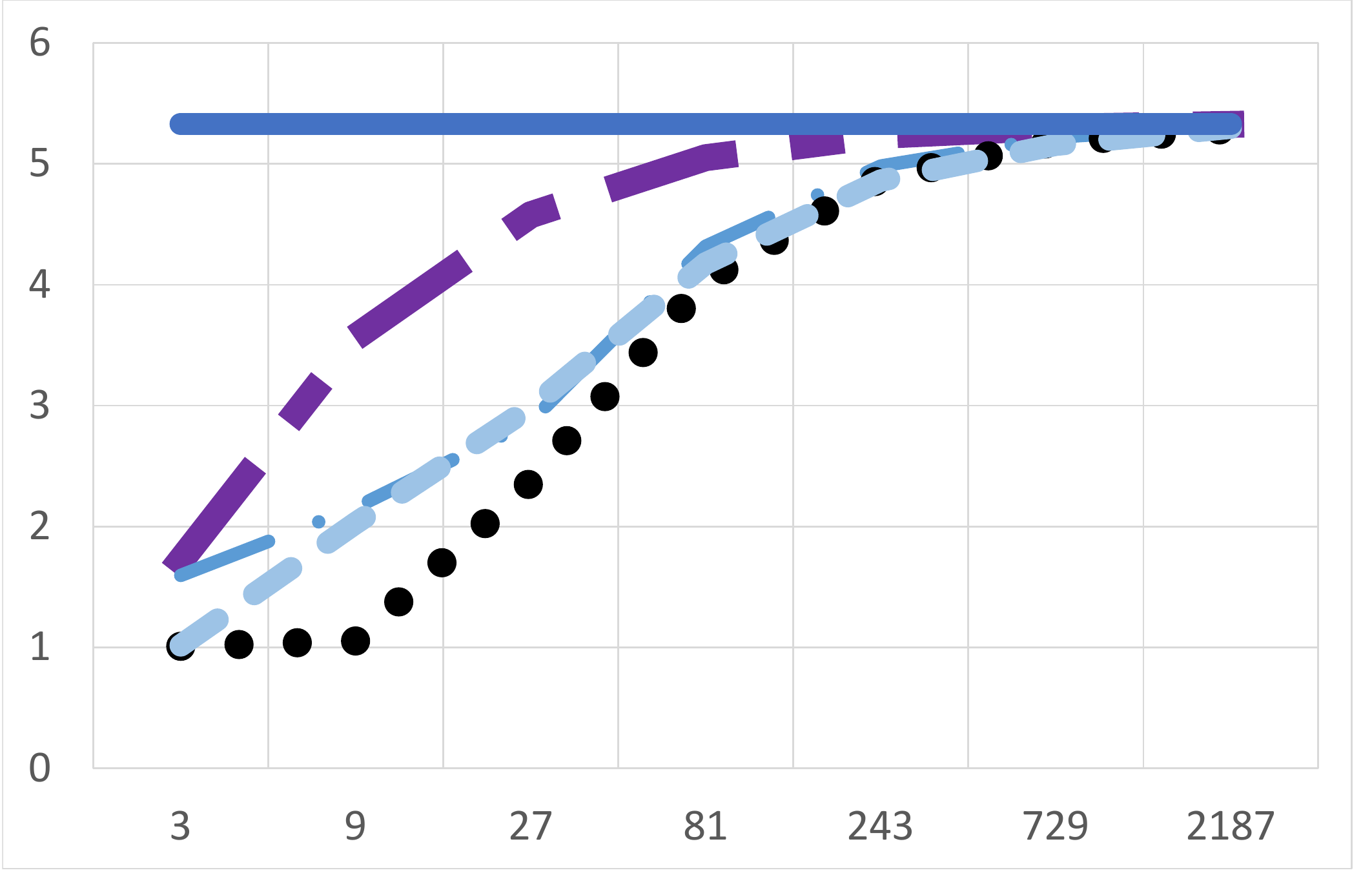}\\
				\includegraphics[width=.45\columnwidth]{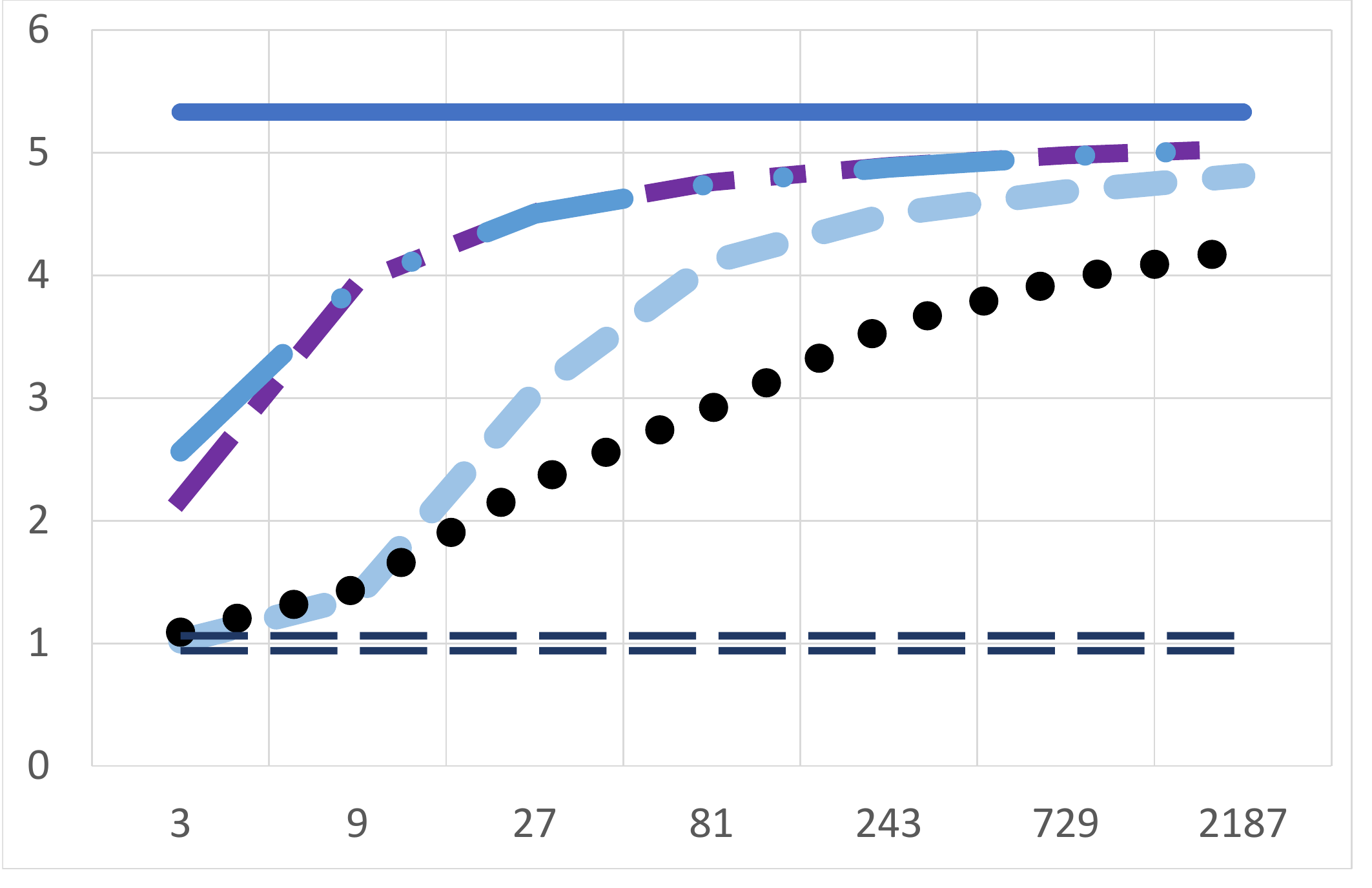}~      \includegraphics[width=.45\columnwidth, height=2.5cm]{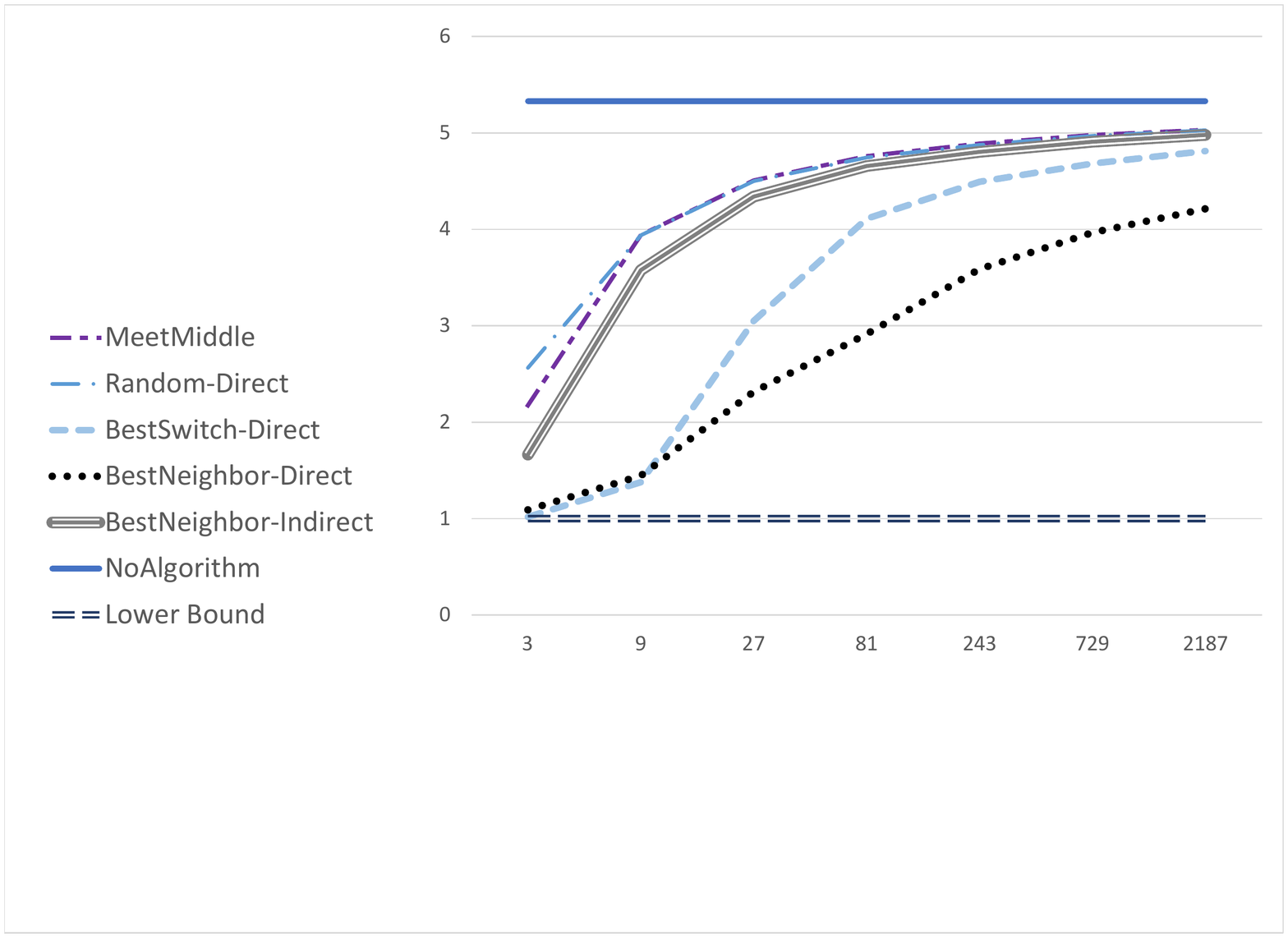}
				\caption{Amortized communication cost as a function of the (tenants) guest graph size after 3m requests for host graph \BC{3}{7} and for \emph{upper left:} all-to-all communication $G^{(K_{x})}$,
\emph{upper right:} one-to-all communication $G^{(S_{x})}$, \emph{bottom left:} $G^{(\BC{3}{\log_3(x)-1})}$ pattern.}
				\label{fig:First-3-8-Complete-WithOutWeights}
				\end{figure}
				
\subsubsection{Impact of Scale}

We first investigate how the amortized communication cost (over $\sigma$ requests) depends on the network size. We concentrate on \emph{direct} migration algorithms
for now.
Figure ~\ref{fig:First-3-8-Complete-WithOutWeights} (\emph{upper left}) shows that under unweighted all-to-all communication patterns $G^{(K_{x})}$ (as a function of $x$), all migration algorithms strictly improve the
amortized cost. The best performance is achieved by {\sc BestNeighbor}, followed by  {\sc BestSwitch}; {\sc Random} can be more than 50\% more expensive. This is not surprising, and shows that the additional information about the local amortized communication cost generally helps.
Moreover, since the neighbors of a specific VM can be located quite far from each other, the destination should be selected carefully for the swap operation.

Figure ~\ref{fig:First-3-8-Complete-WithOutWeights} (\emph{upper right}) studies the same setting but for a weighted one-to-all communication pattern $G^{(S_{x})}$. While the amortized costs are generally slightly lower in this case, the order of the algorithms is the same as before.
Finally, Figure~\ref{fig:First-3-8-Complete-WithOutWeights} (\emph{bottom left}) shows the results for weighted sub-cube communication patterns $G^{(\BC{n'}{k'})}$.

\subsubsection{Benefit and Limitation of Indirect Swaps}

We next consider the indirect swapping methods. Figure~\ref{fig:Second-3-8-Complete-WithOutWeights} (\emph{upper left}) (for the unweighted all-to-all communication $G^{(K_{x})}$)  and Figure~\ref{fig:Second-3-8-Complete-WithOutWeights} \emph{bottom} (for a weighted sub-cube communication $G^{(\BC{n'}{k'})}$)  provides a comparison of direct and indirect strategies of {\sc BestNeighbor}.
We see that direct swaps are better than indirect swaps: in case of highly connected guest graphs, the indirect strategy
migrates multiple VMs in a non-optimized manner, while the direct algorithm restricts itself to the best switch or neighbor.

	\begin{figure}[h]
				\centering \includegraphics[width=.45\columnwidth]{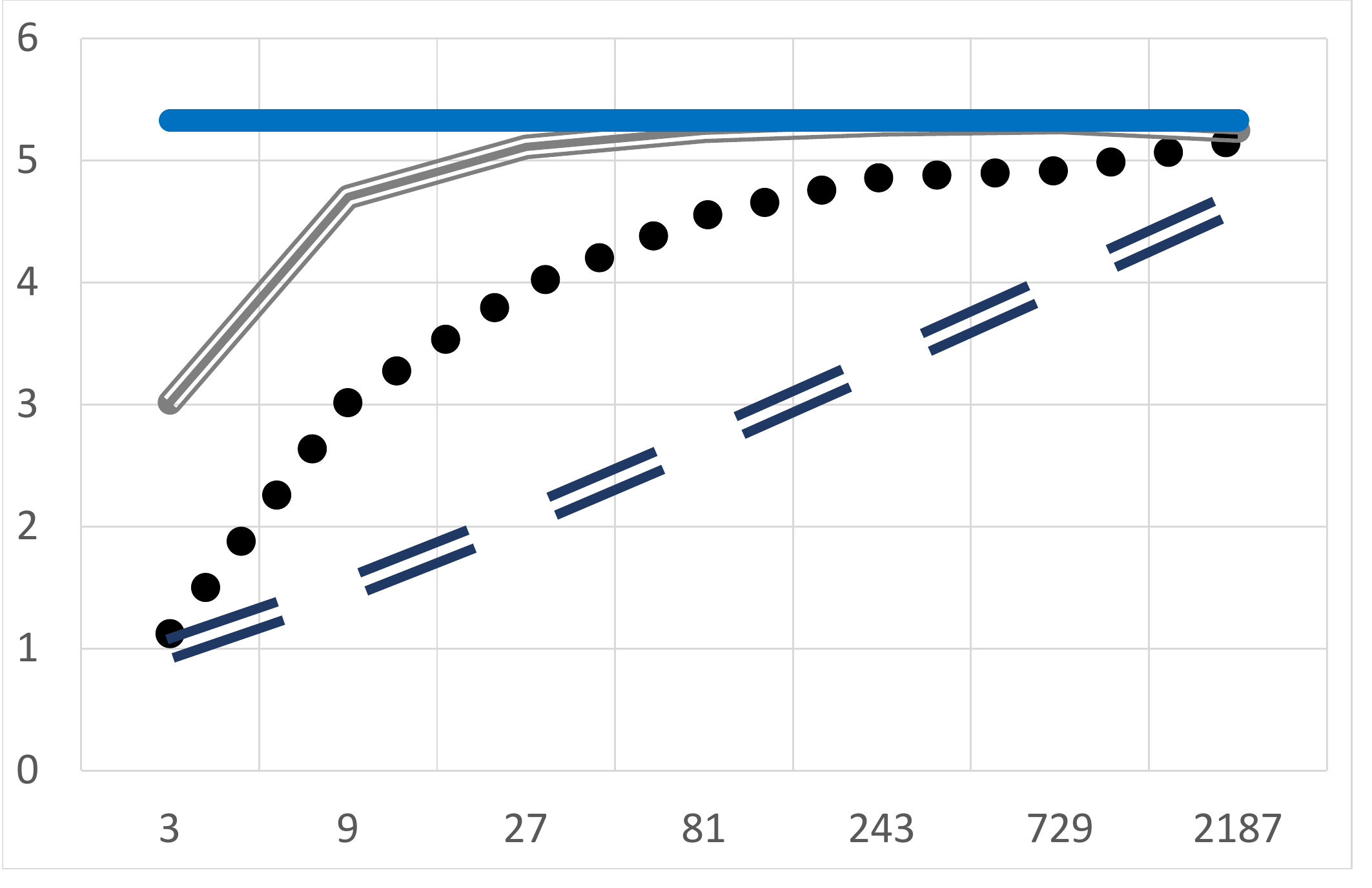}~\includegraphics[width=.45\columnwidth]{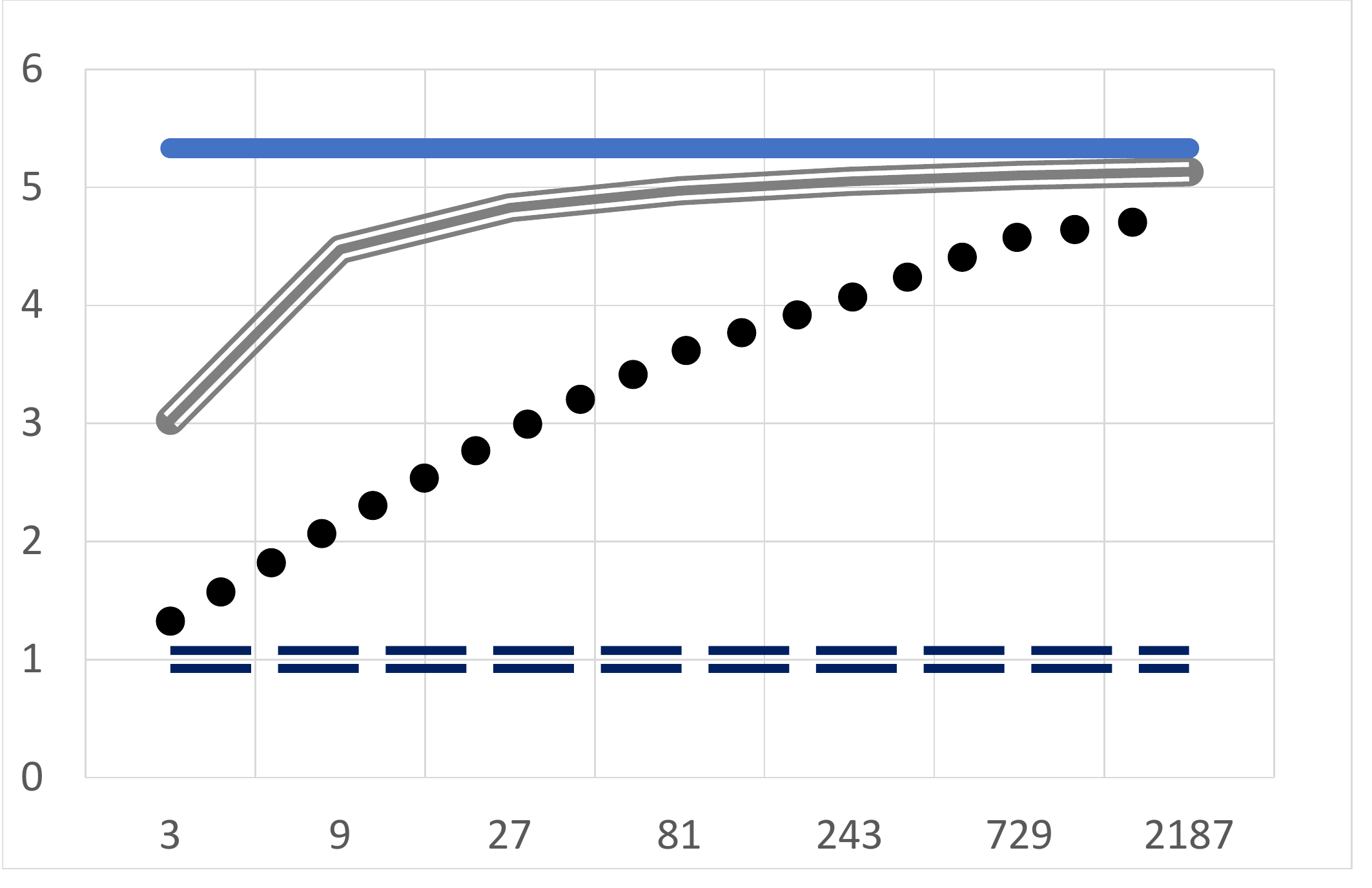} \\
				\includegraphics[width=.45\columnwidth]{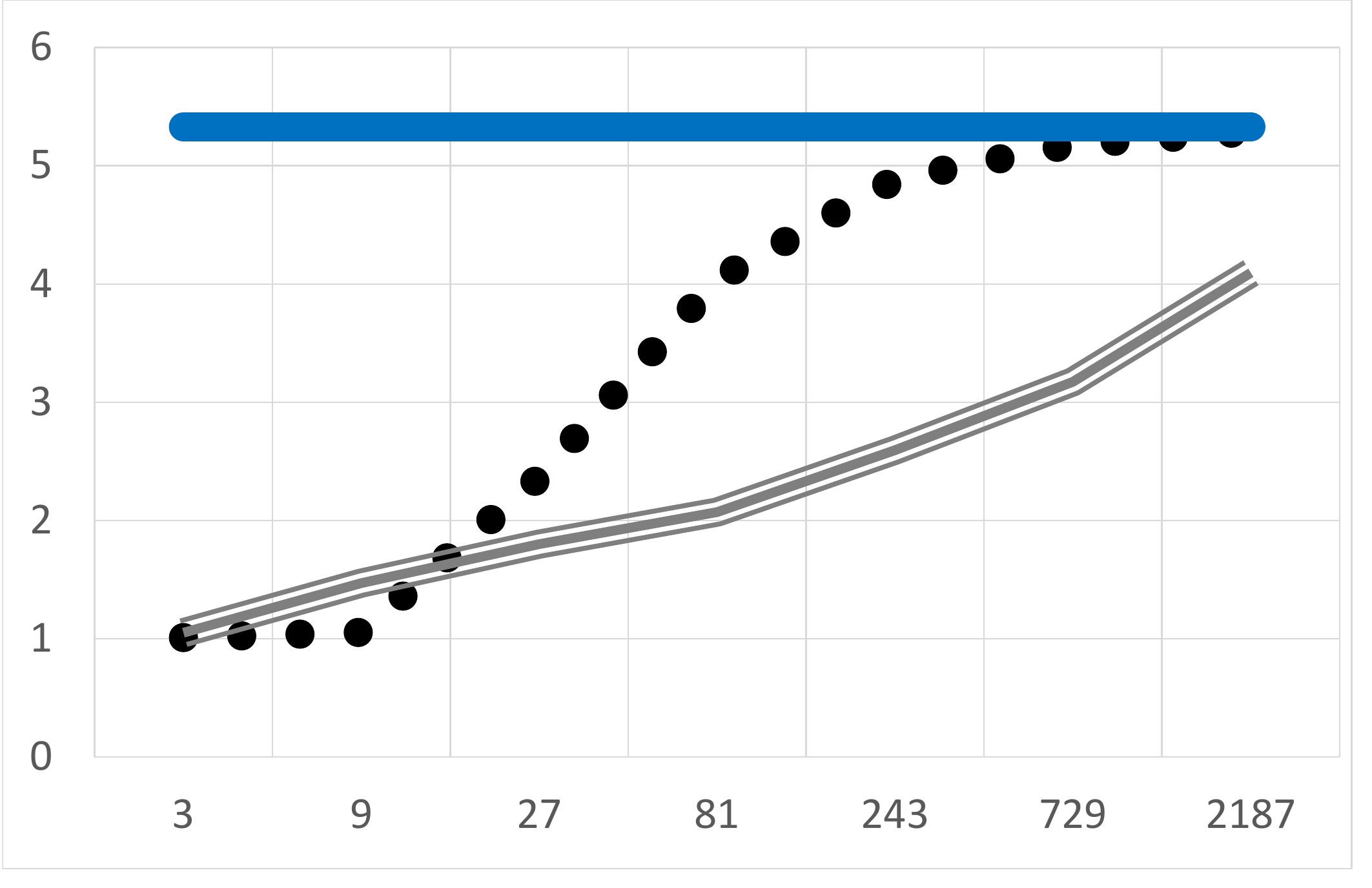}~
				\caption{Direct vs~indirect for {\sc BestNeighbor}: Amortized communication cost as a function of (tenant) guest graph size after 3m requests for host graph \BC{3}{7}  and \emph{upper left:} unweighted guest graph $G^{(K_{x})}$, \emph{upper right:}
 weighted $G^{(\BC{3}{\log_3(x)-1})}$ and \emph{bottom:} weighted one-to-all communication $G^{(S_{x})}$. The algorithm legend is the same as in Figure~\ref{fig:First-3-8-Complete-WithOutWeights}.}
				\label{fig:Second-3-8-Complete-WithOutWeights}
				\end{figure}

However, under a more star-like and weighted communication pattern, see Figure ~\ref{fig:Second-3-8-Complete-WithOutWeights} (\emph{bottom}) (for a weighted one-to-all communication $G^{(S_{x})}$)
the \emph{indirect} strategy is preferable. This can be explained by the fact that once {\sc BestNeighbor} managed to collocate most VMs of a given tenant,
the VMs communicating more frequently will stay closer to the center of the star guest graph; in contrast, in the direct approach, a frequently communicating
VM can be globally displaced again. This is also the reason why in Figure \ref{fig:Second-3-8-Complete-WithOutWeights} (\emph{bottom}), a larger guest graph (i.e., more involved VMs) increases the advantage of $\BestNeighborI$ over $\BestNeighborD$: from Lemma~\ref{lemma:No-Algo} we know that a host $m$ has  $(n-1)^1 {k+1 \choose 1} = 14$ neighbors at distance one, so when the guest graph is of size nine then also the direct approach can maintain a local communication.
However, when the guest graph becomes larger, more and more nodes need to be moved larger from the center and the advantages of the \emph{indirect} swaps are emphasized.




\begin{figure}[h]
				\centering
				 \includegraphics[width=.10\columnwidth, height=2.7cm]{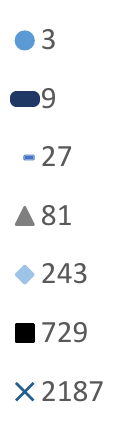}~\includegraphics[width=.45\columnwidth]{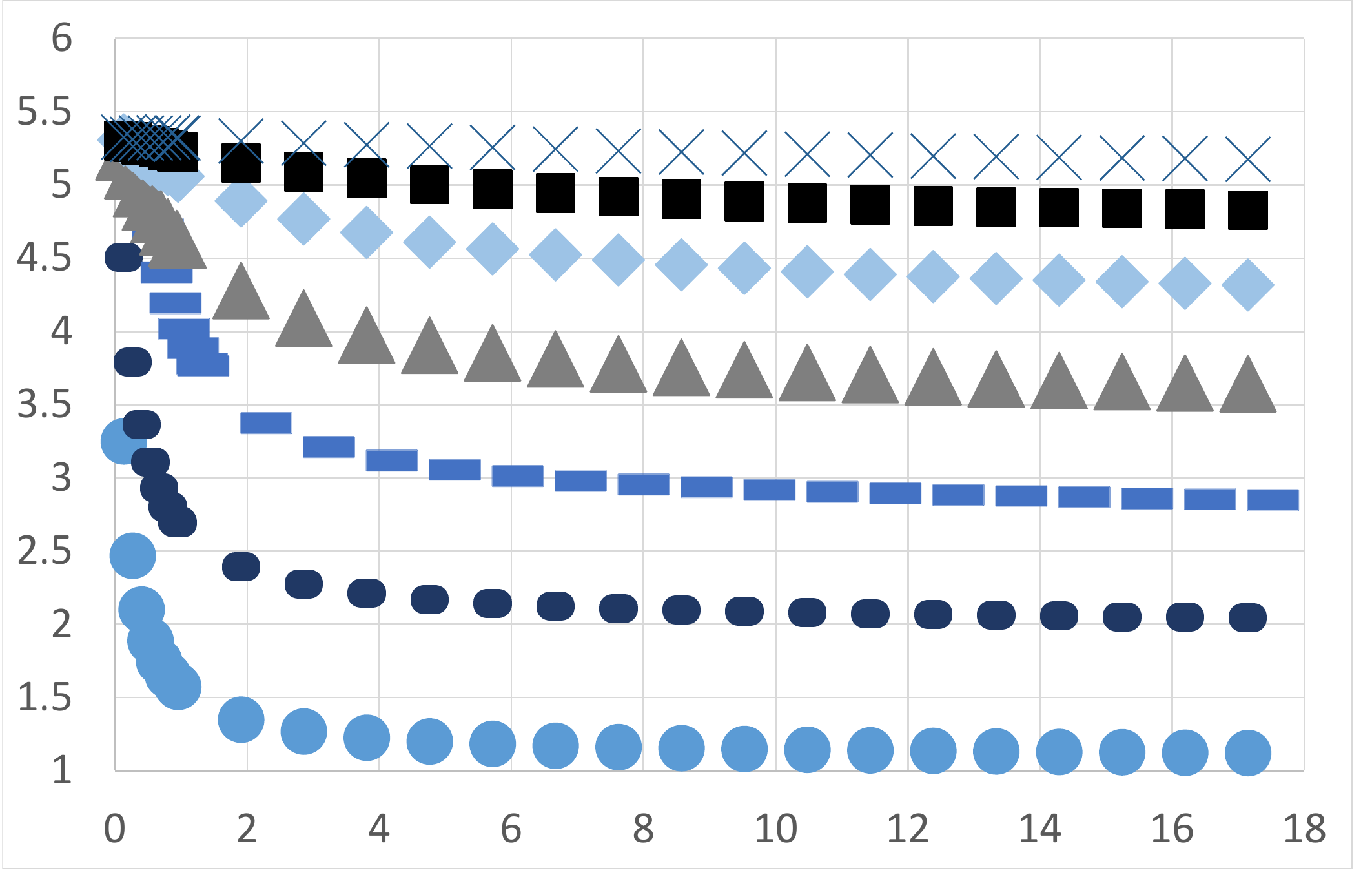}~\includegraphics[width=.45\columnwidth]{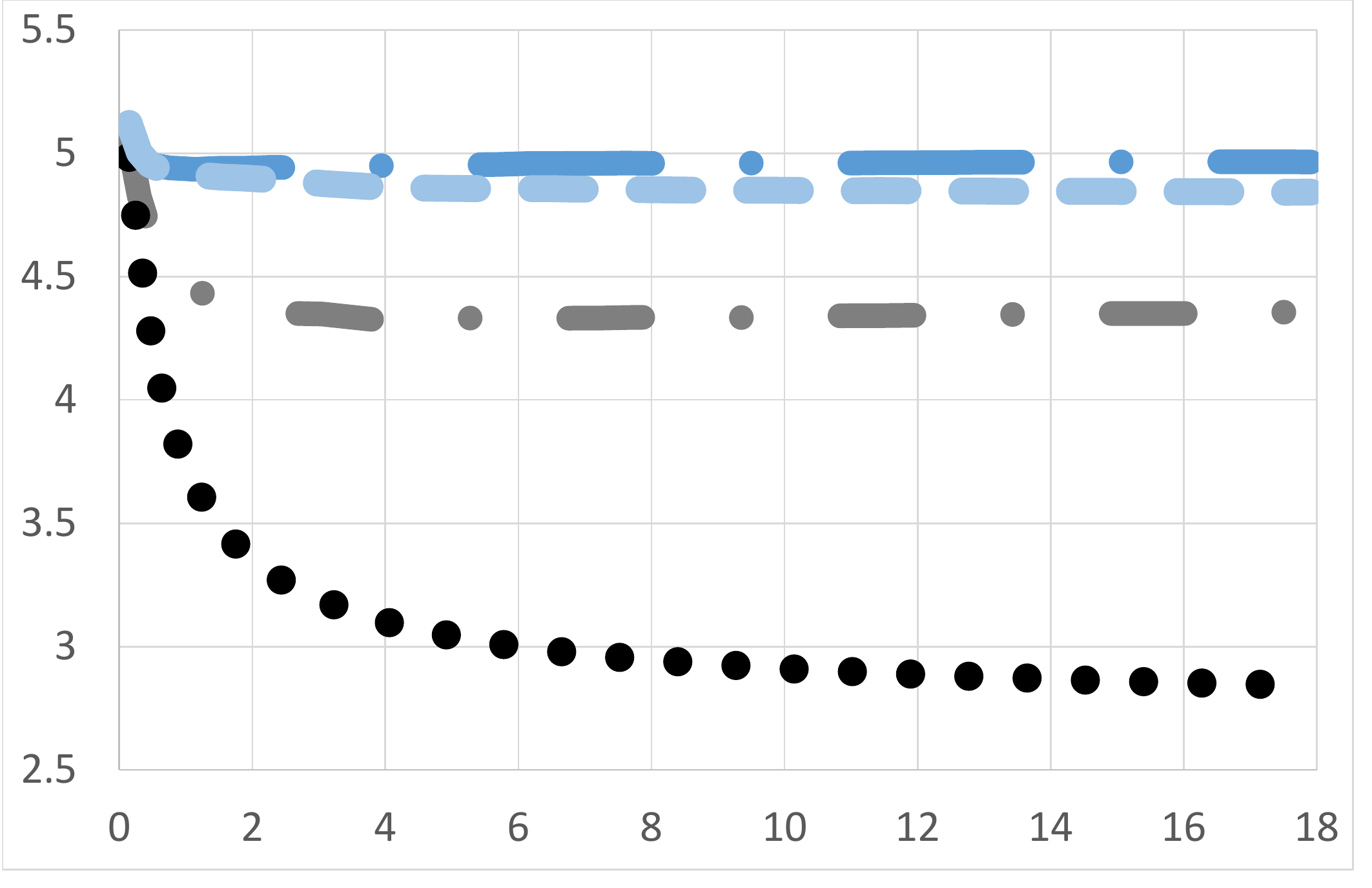}\\
				\caption{Amortized communication cost over time (number of per link requests) for host graph \BC{3}{7}. \emph{Left:} unweighted all-to-all guest graph $G^{(K_{x})}$ under \BestNeighborD (colors represent different sizes of tenant guest graph), \emph{right:} unweighted all-to-all guest graph $G^{(K_{27})}$ (colors represent different types of algorithms). The legend for the algorithms is the same as in Figure~\ref{fig:First-3-8-Complete-WithOutWeights}.}
				\label{fig:Time_first_complete}
				\end{figure}

\subsubsection{Reaction Time}
Of course, an eventual reduction of the amortized communication cost alone is not very interesting: over a long time,
a given communication pattern could also be learned and the embedding adapted accordingly. Rather, the main purpose
of our algorithms is to flexibly react to communication shifts and quickly find a new embedding.

In the following, we provide evidence that our algorithms indeed readjust the embedding quickly. Figure ~\ref{fig:Time_first_complete} \emph{left} show that the convergence is quick for the different kinds of our algorithms, and Figure ~\ref{fig:Time_first_complete} \emph{right} show that it's also quick for different sizes of tenants graphs (both under unweighted all-to-all communication); the time axis is normalized and represents the number of requests per guest graph edge.
We find that the convergence time of all our algorithm is very low: for a 6,561-nodes host graph, a good embedding is found after around 10 requests per edge. Other simulation results (not included in these figures) show that for weighted one-to-all communication the convergence if about after 3 requests per edge.

\subsubsection{Intra-Tenant Migration}

\begin{wrapfigure}{l}{0.5\columnwidth}
				\centering
				\includegraphics[width=.45\columnwidth]{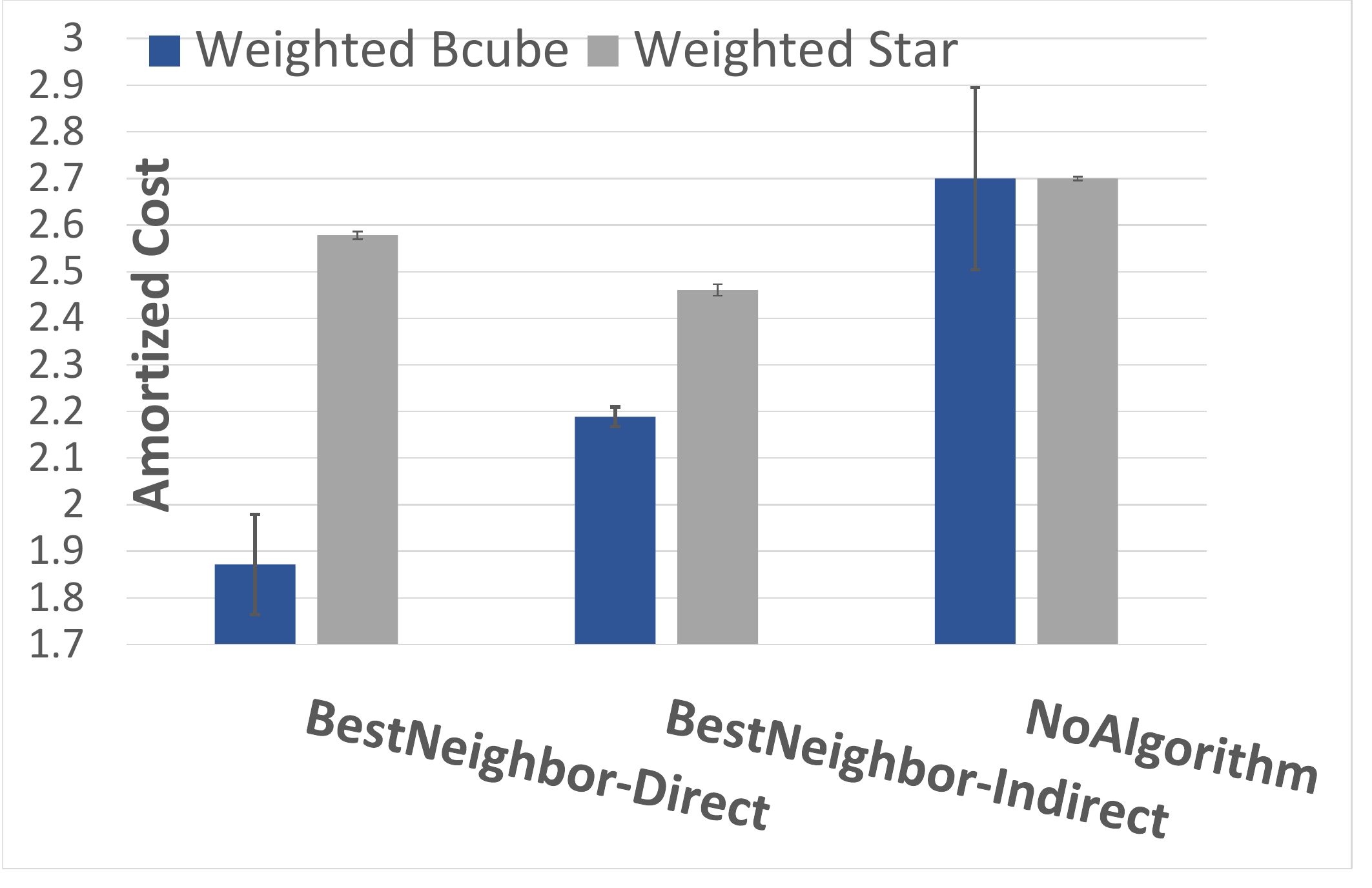}
				\caption{Amortized communication cost on host graph \BC{3}{3}, and a BCube as well as a
star communication pattern (weighted).}
				\label{fig:Third_4-3_PublicDC}
\end{wrapfigure}
Our algorithms can also be used to optimize the VM placement inside a tenant: we restrict the migration destinations and neighbors
to the VMs of the given tenant. Naturally, this reduces the number of migration options significantly, and we expect lower gains from migration.
In order to investigate the benefits and limitations of intra-tenant migrations, we consider a setting where the substrate network
hosts a single, connected tenant. This allows us to abstract from cost artefacts due to non-local VMs placements: the single tenant
is mapped locally, but in a random and hence suboptimal manner.
Figure~\ref{fig:Third_4-3_PublicDC} shows our results for host graph $\BC{3}{3}$ and
a BCube and a star communication pattern. The figure indicates that
even inside a single tenant, migration can help reducing the communication
cost by around 50\% under a BCube traffic matrix; under a one-to-all
communication pattern, the benefits are lower.


\section{Related Work}\label{sec:relwork}

Today's networks become more and more dynamic in the sense that they are able to self-adjust to the network state, user demand,
or even energy cost, and the benefits of process migration have been exploited long before
the emergence of the cloud computing paradigm, e.g., for load-balancing
applications~\cite{migration-early}. Generally, applications range from
self-optimizing peer-to-peer topologies over green computing (e.g.,
due to reduced energy consumption)~\cite{elastictree} to
adaptive virtual machine migrations in
datacenters~\cite{Shang2010Energy-aware}, microprocessor memory
architectures~\cite{lis2011brief}, grids~\cite{batista2007self} or elastic virtual and wide-area cloud networks~\cite{ucc12mip}.
Other self-adjusting routing scheme were considered, e.g., in
scale-free networks to overcome congestion \cite{PhysRevE.80.026114}.

VM migration in cloud computing has been proposed to improve resource utilization as well as to balance load
and alleviate hotspots~\cite{seffi-vm-migration}, or even save energy~\cite{elastictree}. For example, the VMware Distributed
Resource Scheduler uses live migration to balance load in
response to CPU and memory contention.

Seamless VM migration can be implemented in different ways.
One may \emph{pre-copy}~\cite{ref3} the
VM memory to the destination host before releasing
the VM at the source,  or defer the memory transfer until the processor state has been sent
to the destination (the so-called \emph{post-copy}~\cite{ref9} approach).
Wood et al.~\cite{sandpiper} investigate automated black-box, gray-box, and hybrid strategies for VM
migration in datacenters.
Researchers also conduct measurements and derive models for VM migration costs, e.g., under Web 2.0 workloads and quality-of-service sensitive applications.~\cite{mess}

VM migration has also been proposed for wide-area networks, where a lazy copyon
reference can be used for moving VM disk state to reduce migration costs over weak links.~\cite{wood-wan}
In the wide-area, moving entire \emph{services} closer to the (mobile) users can reduce access latency~\cite{migton}, and there
 also exists work on the migration of entire \emph{virtual networks}~\cite{ucc12mip} which are latency-critical (``move-with-the-sun'')
 or latency-uncritical (``move-with-the-moon''). Also in the context of network virtualization,
Hao et al.~\cite{cloud-netvirt} have shown that under certain circumstances,
the migration
of a Samba front-end server closer to the clients can be beneficial
even for bulk-data applications.

In the theory community, many migration
problem variants have been studied in the context of
online page migration, and more generally, online metrical task systems~\cite{borodin}.
There exist several interesting results by Naor et al.~on how to embed
and migrate services to reduce load, e.g.,~\cite{podc11,seffi-mini}.
Indeed, given a fixed communication pattern between pairs of VMs, our work is related to classic
graph embedding problems such as the \emph{minimum linear arrangement} (e.g.,~\cite{mla}) of a graph.
Recently, such embedding problems have also been studied from the perspective of self-adjusting
networks, in the context of distributed splay datastructures and peer-to-peer networks~\cite{ipdps13}.
Indeed, our algorithms are inspired by the classic splaying techniques introduced
in the seminal work by Sleator and Tarjan~\cite{Sleator:1985:SBS:3828.3835} on self-adjusting search trees,
in the sense that we also aggressively migrate VMs closer together.

However, we are not aware on any literature on online and adaptive VM migration algorithms that flexibly
adapt to a dynamic demand.


\section{Conclusion}\label{sec:conclusion}

We understand our paper as a first step towards a better understanding of simple and adaptive VM
migration strategies. The presented Destination-Swap algorithms are pair-based only and do not try to infer or accumulate
much information about the environment. This slim solution increases the flexibility, and hence reduces the reaction time of the
algorithms.

We have shown that this approach can indeed improve the embedding of VMs, especially in inter-tenant situations where VMs can be globally
optimized throughout the entire (private) datacenter. If a tenant must improve the embeddings itself and is restricted to a given set
of physical machines, the benefits are naturally lower but still visible. Moreover, our results suggest that there exists an interesting
tradeoff between direct and indirect approaches, and we find that indirect algorithms are better for sparser and more skewed
communication patterns, as locality is preserved.

Our work opens several interesting directions for future research. Obviously, a deeper understanding of the tradeoff between
the embedding quality and the number of migrations needs to be developed. Another interesting question regards the amount of
additional information (beyond the local amortized cost) needed to improve the mapping further and faster.

{\footnotesize
\renewcommand{\baselinestretch}{.83}
\bibliographystyle{abbrv}
\bibliography{migration}
}
\end{document}